\documentclass[a4paper,11pt,dvipsnames]{article}
\usepackage{jheppub} 
\usepackage{lineno}

\usepackage{orcidlink}
\usepackage{hyperref}

\usepackage{dsfont}
\usepackage{mathrsfs}

\usepackage{algorithm}
\usepackage[noend]{algpseudocode}

\usepackage{amsthm}
\numberwithin{equation}{section}
\theoremstyle{plain}
\newtheorem{theorem}{Theorem}[section]

\theoremstyle{definition}
\newtheorem{definition}[theorem]{Definition}

\theoremstyle{remark}

\usepackage{booktabs}
\usepackage{siunitx}
\usepackage{multirow}
\usepackage{colortbl}
\renewcommand{\arraystretch}{1.2}
\usepackage{xcolor}

\arxivnumber{2510.26081}

\title{\boldmath Group-equivariant diffusion models for lattice field theory}

\keywords{Algorithms and Theoretical Developments, Lattice Quantum Field Theory, Stochastic Processes}

\author[a,b]{Octavio Vega\,\orcidlink{0000-0002-4360-8126},}
\author[c]{Javad Komijani\,\orcidlink{0000-0002-6943-8735},}
\author[a,b]{Aida El-Khadra\,\orcidlink{0000-0001-9105-8213},}
\author[c]{and Marina Marinkovic\,\orcidlink{0000-0002-9883-7866}}
\affiliation[a]{Department of Physics, University of Illinois Urbana-Champaign,\\1110 W Green St, Urbana, IL 61801, U.S.A.}
\affiliation[b]{Illinois Center for Advanced Studies of the Universe,\\1110 W Green St, Urbana, IL 61801, U.S.A.}
\affiliation[c]{Institut für Theoretische Physik, ETH Zürich,\\Wolfgang-Pauli-Str. 27, 8093 Zürich, Switzerland}

\emailAdd{octavio5@illinois.edu}
\emailAdd{jkomijani@gmail.com}
\emailAdd{axk@illinois.com}
\emailAdd{marinama@phys.ethz.ch}

\abstract{Near the critical point, Markov Chain Monte Carlo simulations of lattice quantum field theories become increasingly inefficient due to critical slowing down. In this work, we investigate score-based symmetry-preserving diffusion models as an alternative strategy to sample two-dimensional $\phi^4$ and ${\rm U}(1)$ lattice field theories. We develop score networks that are equivariant to a range of group transformations, including global $\mathbb{Z}_2$ reflections, local ${\rm U}(1)$ rotations, and periodic translations $\mathbb{T}$. The score networks are trained using an augmented training scheme, which improves sample quality in the simulated field theories. We also demonstrate that our symmetry-aware models outperform generic score networks in expressivity and effective sample size.}

\begin{document}
\maketitle
\flushbottom

\section{Introduction}
Lattice Quantum Field Theory (LQFT) provides a non-perturbative avenue for simulating field theories via the path integral formalism~\cite{Feynman:1942us} in Euclidean space-time~\cite{Wilson:1974sk}. In LQFT, Monte Carlo simulation is used to compute expectation values of the observables, which requires averaging over ensembles of field configurations. The efficient sampling of lattice field theories is crucial for non-perturbative studies of fundamental interactions, yet conventional algorithms suffer from poor scalability when approaching the continuum limit~\cite{Fantoni:2021ghj, Albandea:2021kwe}. In this paper, we study Monte Carlo techniques in the framework of diffusion models to address the current limitations of standard sampling approaches.

In LQFT with Euclidean action $S[\phi]$, an expectation value of the observable $\mathcal{O}$ is computed with respect to the path integral measure 
\begin{equation}\label{eq:exp_val_obs}
    \langle \mathcal{O} \rangle = \frac{1}{\mathcal{Z}}\int \mathcal{D} \phi \; \mathcal{O}(\phi) e^{-S[\phi]},
\end{equation}
where $\mathcal{Z}$ is the corresponding Euclidean partition function. The functional integration measure $\mathcal{D}\phi$ represents a sum over all possible configurations of the dynamical degrees of freedom, represented as discretized fields $\phi(x)$ living on lattice sites $x \in \mathbb{Z}^{N_d}$. The expectation value \eqref{eq:exp_val_obs} can be estimated numerically by averaging over batches of samples $\{\phi_{i}\}_{i=1}^N \sim p$ drawn from the probability distribution $p[\phi] = \frac{e^{-S[\phi]}}{\mathcal{Z}}$
\begin{equation}
    \langle \mathcal{O}\rangle \approx \frac{1}{N} \sum_{i=1}^N \mathcal{O}(\phi_i).
\end{equation}
The physical distribution $p[\phi]$ can be arbitrarily complicated, often being defined over high-dimensional, multimodal phase spaces. Additionally, the normalization constant $\mathcal{Z}$ is generally intractable. Sampling therefore requires carefully tailored algorithms and large computational resources. In LQFT, sampling is generally performed with Markov chain Monte Carlo (MCMC) approaches, such as Metropolis-Hastings~\cite{Metropolis:1953am} or Hamiltonian Monte Carlo (HMC)~\cite{Duane:1987de, Gupta:1992np, Kennedy:1993jy}.

Traditional MCMC algorithms suffer from poor scaling with increasing lattice volumes and decreasing lattice spacings. Critical slowing down~\cite{Wolff:1989wq} causes autocorrelation times between samples to grow very large, and topological freezing results in insufficient exploration of the configuration space, manifested though large autocorrelation times of the topological charge~\cite{Woit:1983tq, Polonyi:1983ck, Alles:1996vn}. Machine learning (ML) has been applied to accelerate various stages of the LQFT simulation pipeline, especially ensemble generation~\cite{Boyda:2022nmh, Lawrence:2025rnk}.

Generative artificial intelligence has become increasingly popular in its applications to scientific research and high-performance computing, with their widely developing instances being deep generative models~\cite{Tomczak:2022DeepGenerativeModeling}. There are several classes of generative models with different implementations, capabilities, and limitations, including variational autoencoders~\cite{Kingma2014}, generative adversarial networks (GAN)~\cite{Goodfellow:2014nips}\footnote{See refs.~\cite{Zhou:2018ill, Pawlowski:2018qxs} for applications of GANs to LQFT.}, and normalizing flows~\cite{Rezende:2015icml, Dinh:2017, Papamakarios:2021}.

It has already been established that normalizing flows are capable of probabilistic modeling on non-Euclidean geometries~\cite{Rezende:2020toriSpheres} as well as sampling of various lattice field theories~\cite{Albergo:2019eim, Kanwar:2020xzo, Boyda:2020hsi, Albergo:2021vyo, Albergo:2022qfi, Abbott:2022zhs, Abbott:2022hkm}. Continuous normalizing flows~\cite{Chen2018NeuralODE}, along with stochastic normalizing flows~\cite{Wu:2020SNF}, have also been applied to lattice field theories~\cite{Caselle:2022acb, Gerdes:2022eve, Caselle:2022esc, Caselle:2023mvh, Caselle:2023uel, Caselle:2024ent, Bulgarelli:2024cqc}. Through robust experimentation, progress continues to be made towards understanding the scalability of flow-based sampling towards the continuum for LQFT~\cite{Abbott:2022zsh, Abbott:2023thq, Cranmer:2023MLSampling, Abbott:2024knk, Abbott:2025kvi} and addressing the associated challenges therein~\cite{Komijani:2023fzy, Bulgarelli:2024brv, Komijani:2025yjz}.

Another promising class of deep generative models that have seen tremendous success in an array of contemporary applications are diffusion models. Drawing from the theory of non-equilibrium thermodynamics, diffusion models frame the training and sampling pipelines through the coupling of a forward (diffusion) process and a reverse (denoising) process, respectively. The inference of new samples is formulated as the solution trajectory to a stochastic differential equation that analytically reverses the forward process~\cite{Anderson:1982ReverseDiffusion}, viewed simply as evolution under the heat equation. Diffusion models can be viewed alongside normalizing flows through the lens of the Fokker–Planck equation. The former is described by the diffusive component (heat equation) and the latter by the deterministic component (continuity equation). As opposed to learning bijective maps which deterministically transform data, diffusion models learn a reversible stochastic process whose inverse transports `diffused' configurations back into posterior samples over time.

Diffusion models have been used with remarkable success for computer vision tasks~\cite{Croitoru:2023DiffusionVision} and high-quality image generation with models such as GLIDE~\cite{Nichol:2021GLIDE}, Imagen~\cite{Saharia:2022Imagen}, Stable Diffusion~\cite{Rombach:2022LDM}, and DALL-E 2~\cite{Ramesh:2022esg}. Another emerging application of diffusion models is to drug discovery and \emph{de novo} protein design~\cite{Pang2024DrugGen, Alakhdar:2024DiffusionDrugDesign}, with specific implementations in RFdiffusion~\cite{Watson2023RFdiffusion}, DiffInt~\cite{Sako:2025DiffInt}, and DiffDock~\cite{Corso:2023DiffDock}.

Diffusion models have also recently been studied in the context of lattice field theory~\cite{Wang:2023sry, ranner2024sampling, Habibi:2024fbn}. The authors of refs.~\cite{Wang:2023exq, Aarts:2024agm} provide an interesting discussion on the similarities between score-based generation and stochastic quantization~\cite{Damgaard:1987stochastic} for scalar field theories, upon which they expand for Abelian gauge theory in refs.~\cite{Zhu:2024DiffusionLattice, Zhu:2025pmw}. It is interesting to explore the efficacy of diffusion-based methods for lattice field theories near phase transitions. Furthermore, the pursuit of diffusion-based sampling for lattice QFT invites further studies of phase transitions and critical behavior with potentially greater efficiency.

In this work, we introduce a new framework for sampling in lattice field theory with symmetry-preserving diffusion processes. We apply our method to two primary examples: scalar field theory and Abelian gauge theory in two spacetime dimensions. The main features of our framework are:
\begin{itemize}
    \item \textbf{Generalized diffusion processes:} we present a formalism for stochastic processes which clarifies the role of reversibility in diffusion processes and supports Langevin-based predictor-corrector methods. Our treatment also encompasses both the variance-preserving and variance-expanding pictures of diffusion.
    \item \textbf{Equivariance of score functions:} a key theoretical result derived in this paper is the transformation law of score functions under general symmetry group actions, proved in appendix~\ref{apx:score_func_symmetries}. We discuss this transformation law for abstract groups, including the two main examples studied in this paper: $\mathbb{Z}_2$ and ${\rm U}(1)$. We also provide interpretations of this general result for cases where the underlying data occupies different spaces, such as group manifolds or Lie algebras.
    \item \textbf{Exact symmetry:}  we develop group-equivariant score networks with exact $\mathbb{Z}_2$ and ${\rm U}(1)$ equivariance, and partial translation equivariance. Unlike previous approaches relying on approximate symmetry enforcement via data augmentation, our implementations enforce symmetry by construction.
    \item \textbf{Force-guided score matching:} we introduce a new objective function which augments the traditional score matching training procedure by additionally constraining the learned score to match the theory's force field. This acts as a form of physics-informed guidance, which corrects the score network at the endpoint of the denoising trajectory and yields higher sample quality in our experiments.
\end{itemize}

Our paper is organized as follows: in section~\ref{sec:preliminaries}, we develop the formalism for diffusion processes through stochastic differential equations and discuss our main theoretical result, which is the transformation law for score functions under group symmetries. We further outline the procedure for training score networks and calculating model likelihoods. Next, in section~\ref{sec:0d_toy_example}, we walk through an illustrative example of our pipeline for a $\mathbb{Z}_2$-invariant toy theory. This is followed by section~\ref{sec:phi4_application}, where we present an application of our method to sampling from the $\phi^4$ scalar field theory, accompanied by a discussion on the impacts of enforcing model symmetries. In section~\ref{sec:u1_2d_application}, we apply our procedure to ${\rm U}(1)$ gauge theory in 2D, and demonstrate the validity of the proposed models by computing physical observables. We conclude in section~\ref{sec:conclusion} with a summary and discussion of future directions for this work.
Our appendices expand on the background and formalism for lattice field theory (appendix~\ref{apx:lqft_formalism}), stochastic processes (appendix~\ref{apx:fp_dynamics}), and group symmetries (appendix~\ref{apx:group_symmetries}).

\section{Building diffusion models for LQFT}\label{sec:preliminaries}
Diffusion models, originally inspired by non-equilibrium thermodynamics~\cite{SohlDickstein:2015Deep}, consist of two stochastic processes that evolve data over time. During the forward process, ground-truth data are ``diffused" by random noise until they reach a purely corrupted state where the original signal and its internal correlations are completely erased. After, the reverse process progressively undoes the destruction by ``denoising" samples from a noisy prior to reconstruct realistic samples. In general, both processes are formulated as continuous-time stochastic processes, but can be simulated through discrete-time Markov chains~\cite{Ho2020DDPM}.

In this section, we begin by offering a general formulation of diffusion processes through stochastic differential equations, and explain how they can be reversed. We show how this formalism unifies different frameworks of diffusion models by careful changes of variables. Next, we informally outline our key result that describes how score functions transform equivariantly under group symmetries of the action. Then, we outline how score matching is used to train diffusion models and provide a new addition called \emph{force guidance} which leverages information about the underlying physical theory. We conclude this section by presenting how model likelihoods are computed to allow for bias-corrected calculations.

\subsection{Diffusion as a reversible stochastic process}
A forward diffusion process can be written as a stochastic differential equation (SDE):
\begin{equation}\label{eq:general_fwd_sde}
    d\phi_t = f(\phi_t, t)dt + \sigma dW.
\end{equation}
Here, $\phi_t$ is the state of the system (or field configuration) at diffusion time $t$,
$f(\phi_t, t)$ represents a deterministic drift term, and $dW$ is the increment of a Wiener
process with zero mean and variance $|dt|$. In general, $\sigma$ may also be
time-dependent, but we suppress this for now and work with a constant diffusion coefficient.

By setting the drift term to zero, we can define a diffusion model in the framework of machine learning and train it via the algorithms presented in section~\ref{sec:training}. However, a constant $\sigma$ may require exponentially large time for the diffusion process to lose information about the initial distribution of $\phi_0$. A more efficient approach is to add a non-vanishing drift term $f(\phi_t) = -\gamma\phi_t$ for $\gamma > 0$, which transforms the forward process into
\begin{equation}\label{eq:change_of_vars_VP_SDE}
    d\left(e^{\gamma t}\phi_t\right) = \sigma e^{\gamma t} dW.
\end{equation}
This choice accelerates the loss of information while still enabling the use of the same simple algorithms as those employed in the case of vanishing drift.

\begin{figure}
    \centering
    \includegraphics[width=1.00\linewidth]{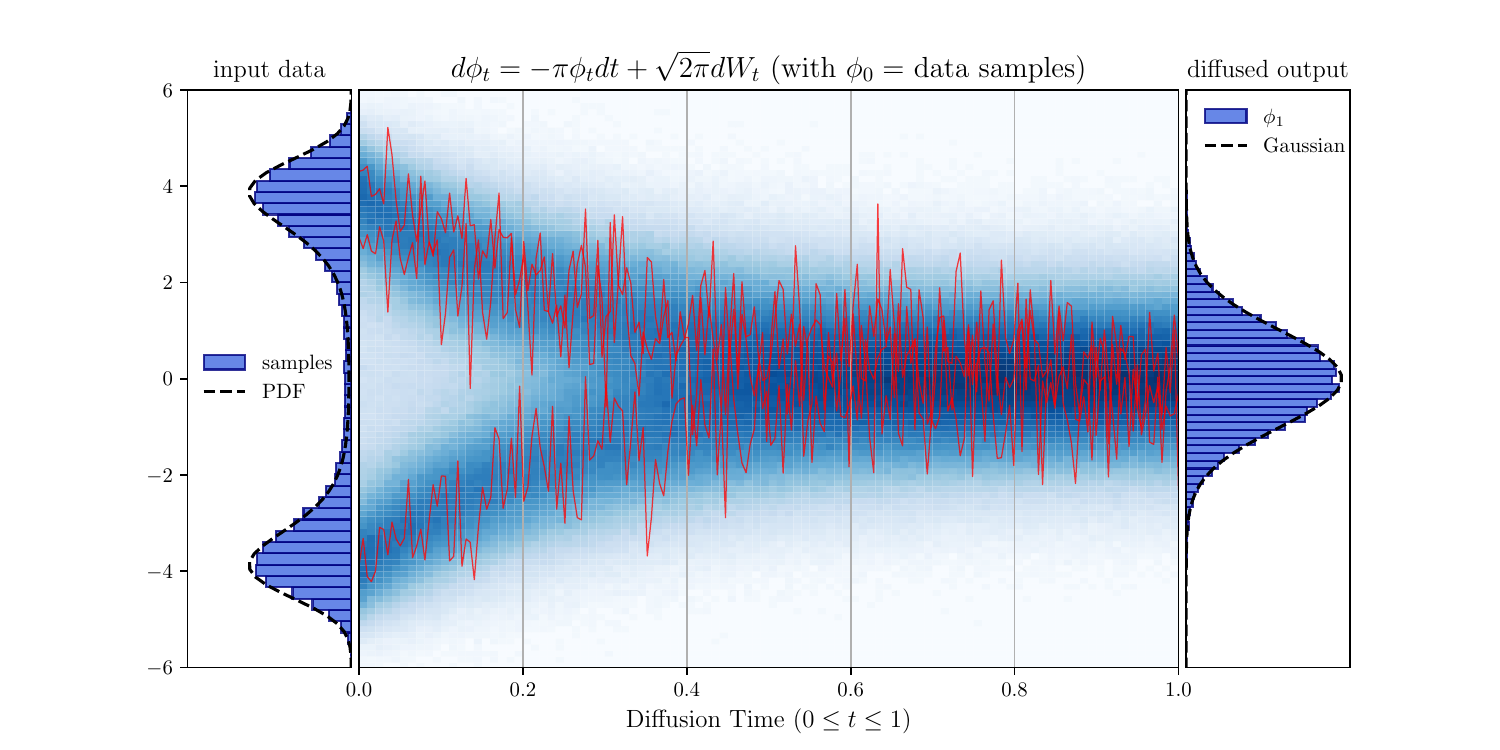}
    \caption{Visualization of the forward diffusion process over time in the variance-preserving picture. The histogram (blue) of data samples is seen to evolve into a Gaussian distribution while the variance is approximately constant in time. The red lines represent three sample trajectories of the data stochastically evolving towards $t=1$.}
    \label{fig:forward_diffusion_visualization}
\end{figure}

The solution to the SDE at time $t > 0$ can be expressed in closed form:
\begin{equation}\label{eq:simplified_vp_example}
  \phi_t = e^{-\gamma t} \phi_0 + \sigma \sqrt{\frac{1 - e^{-2\gamma t}}{2\gamma}} \epsilon_t,
\end{equation}
where $\epsilon_t \sim \mathcal{N}(0, \mathds{1})$. We may set $\sigma = \sqrt{2\gamma}$ so that the variance of $\phi_t$
approaches 1 at large values of $t$, i.e., for $t \gg 1 / \gamma$. This is an example of \emph{variance-preserving} diffusion processes. Alternatively, one can let $\sigma$ depend on time (e.g. as $\sigma_0 e^{\gamma t}$) and suppress the drift term. This is called the \emph{variance-expanding} picture, and it is discussed further below. In figure~\ref{fig:forward_diffusion_visualization}, we show an example diffusion process evolving according to eq.~\eqref{eq:change_of_vars_VP_SDE} with $\gamma \equiv \pi$ and $\sigma \equiv \sqrt{2\pi}$.

The probability density function (PDF) of $\phi_t$ changes as time evolves from the original distribution at time 0 to a Gaussian-like distribution at time 1. Using $p_t(\phi_t; \sigma)$ to denote the PDF of diffused samples at time $t$, the evolution of $p_t$ is governed by the Fokker-Planck equation (FPE)~\cite{Fokker1914, Planck1917}:
\begin{equation}\label{eq:fp_eqn}
    \frac{\partial}{\partial t} p_t(\phi_t; \sigma)
    = \nabla \left(- f(\phi_t, t) + \frac{\sigma^2}{2} \nabla\right) p_t(\phi_t; \sigma)\,.
\end{equation}
For a derivation of the FPE, see appendix~\ref{apx:fp_equation}.

The process in eq.~\eqref{eq:general_fwd_sde} is reversible because the FPEs of the forward and reverse process must be the same. This follows from the fact that a sequential evolution of the random field $\phi$ governed by equation~\eqref{eq:general_fwd_sde} from time $t_1$ to $t_2 > t_1$ and by an evolution in reverse from $t_2$ to $t_1$ does not change the PDF of $\phi$. The reverse operation is commonly referred to as a \emph{denoising} process, for which there exists a family of stochastic processes:
\begin{equation}\label{eq:general_reverse_process}
    d\phi_t = f(\phi_t, t) dt - \frac{\sigma^2 + \tilde\sigma^2}{2}\nabla \log p_t(\phi_t; \sigma)dt + \tilde\sigma dW
\end{equation}
for any real value of $\tilde \sigma$\footnote{Similar to $\sigma$, $\tilde\sigma$ can depend on time, but we drop its time-dependence for simplicity.}. We derive this reverse process in appendix~\ref{apx:reverse_process_derivation}. A common choice is to let $\tilde \sigma = \sigma$. When $\tilde \sigma = 0$, the denoising process in eq.~\eqref{eq:general_reverse_process} simplifies to a completely deterministic trajectory called the \textit{probability flow ODE}, which can be interpreted as a continuous normalizing flow~\cite{Chen2018NeuralODE}.

The more general form of the reverse process in \eqref{eq:general_reverse_process} allows for an intuitive understanding of the sampling process. The reverse SDE can be written as the sum of two terms:
\begin{equation}
    d\phi_t = \underbrace{\left(f(\phi_t, t) - \frac{\sigma^2}{2} \nabla \log p_t(\phi_t; \sigma)\right) dt}_{\rm predictor} 
    + \underbrace{\left(\frac{\tilde\sigma^2}{2} \nabla \log p_t(\phi_t; \sigma)|dt| + \tilde\sigma d W\right)}_{\rm Langevin \; corrector}.
    \label{eq:generic_two-sided_sde}
\end{equation}
The first term (predictor) plays the role of a deterministic drift, ensuring that samples evolve backwards according to the diffusion proccess's marginal probabilities.
The second term (Langevin corrector) has two components that compensate each other; it moves samples towards higher density regions of $p_t$ while also adding random noise in such a way that
it preserves the distribution $p_t$.
The preceding explanation holds under the assumption that the samples $\phi_t$ are indeed drawn from the target distribution $p_t(\phi_t)$.
In practice, when the samples are not correctly drawn, e.g. due to discretization effects, the second term can play the role of a corrector that moves the samples towards the correct distribution in the same spirit that the Langevin process works; hence, it is called a Langevin corrector. From this perspective, our reverse SDE intrinsically describes a predictor-corrector sampling scheme~\cite{Hamming1959}, where the second term represents a single application of a Langevin corrector at time $t$.

The apparent asymmetry between eqs.~\eqref{eq:general_fwd_sde} and \eqref{eq:general_reverse_process} due to the latter's extra drift term can be attributed to the Langevin corrector that does not change under time reversal. Equation~\eqref{eq:generic_two-sided_sde} is written such that it can be applied in both forward and reverse directions; setting $\tilde \sigma = \sigma$ and applying eq.~\eqref{eq:generic_two-sided_sde} in the forward direction yields the original forward process in eq.~\eqref{eq:general_fwd_sde}.

A denoising process aims to recover alternative samples of the original data from the diffused versions by removing the noise that is introduced during the forward diffusion process. A crucial step in applying the denoising process is determining the \emph{score function}, defined as the gradient of $\log p_t(\phi_t)$ with respect to the data:
\begin{equation}\label{eq:score_function}
    \boldsymbol{s}(\phi_t, t) := \nabla \log p_t(\phi_t).
\end{equation}
In score-based generative modeling~\cite{Song2019ScoreBased}, the goal is to learn a neural network approximator for the score function $\hat{\boldsymbol{s}}_\theta(\phi_t, t) \approx \boldsymbol{s}(\phi_t, t)$ over time. We discuss how to train score networks in section~\ref{sec:training}. Given a trained score network, the reverse process in eq.~\eqref{eq:general_reverse_process} can be solved numerically to obtain new samples.

We conclude this section by relating the SDEs of variance-expanding (VE) and variance-preserving (VP) diffusion processes with one another. The VP-SDE is given by
\begin{equation}\label{eq:vp_sde}
    d\phi_t = -\frac{1}{2}\beta(t) \phi_t dt + \sqrt{\beta(t)}dW,
\end{equation}
where $\beta(t)$ is a positive noise schedule. In this framework, the variable $\phi_t$ evolves with fixed variance, as seen in figure~\ref{fig:forward_diffusion_visualization}. This is equivalent to the formulation in eq.~\eqref{eq:simplified_vp_example}, where $\beta(t) \equiv 2\gamma$.
By contrast, the VE-SDE is written as
\begin{equation}\label{eq:ve_sde}
    d\phi_t = \sigma^t dW,
\end{equation}
where $\sigma$ is a constant (analogous to a thermal diffusivity) chosen to control the noise level. This process evolves purely by the addition of noise, causing the variance of the variable $\phi_t$ to rapidly increase over time according to
\begin{equation}
    {\rm Var}[\phi_t] = \frac{\sigma^{2t} - 1}{2 \log\sigma}.
\end{equation}
These two pictures can be unified via a change of variables: multiplying eq.~\eqref{eq:vp_sde} by an appropriate integrating factor gives
\begin{equation}
    d\left[e^{\frac{1}{2}\int_0^t \beta(s) ds} \phi \right] = \sqrt{\frac{d}{dt} \left[e^{\int_0^t \beta(s)ds}\right]} dW,
\end{equation}
which takes the form of a VE-SDE with $\sigma(t)^2 = e^{\int_0^t \beta(s) ds}$. From this point onward, we use the variance-expanding (VE) scheme for our diffusion processes.

\subsection{Symmetries of score functions}\label{sec:symmetries_of_score}
In quantum field theories, the action often exhibits one or more symmetries. We denote with $\mathcal{G}(\phi)$ the group of symmetry transformations that leave the action $S(\phi)$ invariant, $S(\mathcal{G}(\phi))=S(\phi)$. Following eq.~\eqref{eq:score_function}, the score function is derived from the action through differentiation with respect to the field variables:
\begin{equation}
	\boldsymbol{s}(\phi, t)  \triangleq -\nabla S_t^{\rm eff}(\phi),
\end{equation}
which is equivalent to the effective force at time $t$. If $G \in \mathcal{G}$ acts smoothly on $\phi$ through the representation $\rho_G$,
\begin{equation}
	\phi \stackrel{G}{\mapsto} \phi' = \rho_G(\phi),
\end{equation}
then since the action is symmetric under $G$, the probability density must remain unchanged. The score function responds to the group action on the space of fields according to a general transformation law:
\begin{equation}\label{eq:score_fn_transformation_law}
	\boldsymbol{s}(\phi, t) \mapsto \boldsymbol{s}'(\phi', t) = \left( \frac{\partial \rho_G(\phi)}{\partial \phi} \right)^{-\top} \boldsymbol{s}(\phi, t),
\end{equation}
which we state and prove in Theorem~\ref{thm:score_func_transformation}. Although the precise interpretation of eq.~\eqref{eq:score_fn_transformation_law} is representation-dependent, the encompassing result is that the score function generally transforms contravariantly with the Jacobian of the change of variables induced by the group action. In this framework, one can view the score function as \emph{equivariant}. Further details, including derivations, formal discussions, and proofs are provided in appendix~\ref{apx:score_func_symmetries}. Specific examples of this result applied to $\mathbb{Z}_2$ in $\phi^4$ theory and ${\rm U}(1)$ gauge fields can be found in section~\ref{sec:0D_z2_symmetry} and appendix~\ref{apx:u1_symmetry_details}, respectively.

\subsection{Training through force-guided score matching}\label{sec:training}
\emph{Score matching}~\cite{Hyvarinen:2005ScoreMatching} is a technique used to train models to approximate the data distribution $p(\phi)$ by minimizing the difference between the true score $\nabla \log p(\phi)$ and the model score function $\boldsymbol{s}_\theta(\phi)$, where $\theta$ denotes trainable model parameters. As opposed to minimizing the Kullback-Leiber divergence~\cite{KullbackLeibler:1951InfoSuff} between the model and target distributions directly, the score matching objective function compares their gradients and is written as\begin{equation}\label{eq:original_sm_objective}
    J(\theta) = \mathbb{E}_{t \sim \mathcal{U}([0, 1])} \mathbb{E}_{\phi_t \sim p_t(\phi_t)}\left[\lambda(t) \left\|\hat{\boldsymbol{s}}_{\theta}(\phi_t, t) - \nabla \log p_t(\phi_t) \right\|_2^2\right].
\end{equation}
To simplify the score matching procedure, it is often more practical to work with the conditional distribution $p_t(\phi_t | \phi_0)$ instead of directly using the marginal distribution $p_t(\phi_t)$. Under mild regularity assumptions, it can be shown that optimizing \eqref{eq:original_sm_objective} is equivalent to optimizing an alternative objective which only involves the conditional distribution~\cite{Vincent2011ScoreDenoising}, such that the inner expectation is replaced by two nested expectations:
\begin{align}\label{eq:marginal_to_conditional_sm}
    &\mathbb{E}_{\phi_t \sim p_t(\phi_t)}\left[\left\|\hat{\boldsymbol{s}}_{\theta}(\phi_t, t) - \nabla \log p_t(\phi_t) \right\|_2^2\right] \nonumber \\
    &\rightarrow \mathbb{E}_{\phi_0 \sim p(\phi)} \mathbb{E}_{\phi_t \sim p_t(\phi_t | \phi_0)}\left[\left\|\hat{\boldsymbol{s}}_{\theta}(\phi_t, t) - \nabla \log p_t(\phi_t | \phi_0) \right\|_2^2\right] + {\rm const.}
\end{align}
With this objective, training only requires random sampling from the dataset ($\phi_0 \sim p_0(\phi)$) and the conditional distribution ($\phi_t \sim p_t(\phi_t |\phi_0)$). The rewritten objective function is expressed as a weighted integral over the temporal interval $0 \leq t \leq 1$:
\begin{equation}\label{eq:conditional_score_matching}
    J(\theta) = \int_0^1 \lambda(t) \, \mathbb{E}_{\phi_0 \sim p(\phi)} \mathbb{E}_{\phi_t \sim p_t(\phi_t | \phi_0)}\left[\left\|\hat{\boldsymbol{s}}_{\theta}(\phi_t, t) - \nabla \log p_t(\phi_t | \phi_0) \right\|_2^2\right] dt.
\end{equation}
The uniform measure with respect to which the outermost expectation over time is taken is easily sampled; however, computing the integral in eq.~\eqref{eq:conditional_score_matching} is often memory intensive. Instead, one may sample a time $t \sim \mathcal{U}([0, 1])$ and simply compute the loss at this single time step. This estimate converges to the full loss and provides a faster way of training. 

Note that the integration constant appearing in eq.~\eqref{eq:marginal_to_conditional_sm} is not necessarily finite. To regulate the loss function, an appropriate weight function is needed, for which we choose a simple weight function $\lambda(t) = \sigma(t)^2$ that vanishes as $t \to 0$. Consequently, training based on this loss function alone gains limited direct insight from the data when $t \ll 1$, relying instead on extrapolation to cover this region. To shift from extrapolation to interpolation, we can incorporate prior knowledge of the score function at $t = 0$ by augmenting the loss function with an additional term as follows:
\begin{equation}\label{eq:force_regularization}
    \mathcal{J}(\theta) = J(\theta) + c_0 \mathbb{E}_{\phi_0 \sim p_0(\phi_0)} \left[\|\hat{\boldsymbol{s}}_\theta(\phi_0, 0) + \nabla S[\phi_0]\|_2^2 \right],
\end{equation}
where $c_0 \geq 0$ is a regularization coefficient. Although this augmentation may not be useful for many applications of diffusion models, it can readily be applied in lattice field theory,
where $\partial_{\phi_0} \log p_0(\phi_0)$ (the force) is analytically known. We summarize our training procedure in Algorithm~\ref{alg:train_0D_dm}.
\begin{algorithm}
\caption{Force-Guided Score Matching}\label{alg:train_0D_dm}
\begin{algorithmic}[1]
    \State \textbf{Input:} Training data $\Phi_0$, variance function $\sigma_t$, neural net $\boldsymbol{s}_\theta$, action $S$
    \State \textbf{Initialize:} Parameters of model $\theta$
    \For{each epoch}
        \State ${\rm Loss} \leftarrow 0$
        \For{each minibatch $\{\phi_0\} \subseteq \Phi_0$}
            \State Sample $t \sim \mathcal{U}([0, 1])$ for each config
            \State Sample $\epsilon \sim \mathcal{N}(0, \mathds{1})$ for each config
            \State $\phi_t = \phi_0 + \sigma_t \epsilon$ \qquad\qquad\qquad\qquad\qquad\qquad\qquad \qquad\qquad  // \texttt{diffuse data (VE)}
            \State $\mathcal{J}(\theta) = \left\|\sigma_t\boldsymbol{s}_\theta(\phi_t, t) + \epsilon\right\|_2^2 + c_0 \left\|\boldsymbol{s}_\theta(\phi_0, 0) + S'(\phi_0)\right\|_2^2$ \;\;\;\; //
            \texttt{estimate loss}
            \State ${\rm Loss} \leftarrow \mathcal{J}(\theta) + {\rm Loss}$
            \EndFor
        \State Compute gradient: $\nabla_\theta {\rm Loss}$
        \State Update parameters $\theta$
    \EndFor 
    \State \Return $\hat{\boldsymbol{s}}_\theta$
\end{algorithmic}
\end{algorithm}

\subsection{Likelihood computations and exactness}\label{sec:likelihood_computation}
Generative models for lattice field theory must be exact, meaning that generated samples must be asymptotically distributed according to the target distribution. The discrepancy between the learned and target probabilities in the ML models can be corrected given that the model probability of samples $q(\phi)$ can be computed.

In principle, the model log-likelihood of generated samples can be computed simultaneously as samples are generated. To illustrate, consider evolving backwards in time by a small step size $h$:
\begin{equation}
    \begin{pmatrix}
        \phi_t \\ \log q_t(\phi_t) 
    \end{pmatrix} 
    \rightarrow 
    \begin{pmatrix}
        \phi_{t - h} \\ \log q_{t-h}(\phi_{t - h}) 
    \end{pmatrix} 
\end{equation}
Then, as discussed previously, the sample evolves first by a deterministic drift, and is then corrected by an extra term involving noise:
\begin{equation}
    \phi_t \xrightarrow[\rm predictor]{-h \left[f(\phi_t, t) - \frac{\sigma^2}{2}\boldsymbol{s}(\phi_t, t)\right]}
    \phi_{t - h}'
    \xrightarrow[\rm corrector]{+h \frac{\tilde \sigma^2}{2}\boldsymbol{s}(\phi_t, t) + \tilde\sigma \sqrt{h}\eta_t} \phi_{t - h}
\end{equation}
The corrector step shifts the sample to another sample also distributed according to the same marginal density via
\begin{equation}
    \phi_{t-h}' \rightarrow \phi_{t-h} = \phi_{t - h}' \underbrace{+ h \frac{\tilde\sigma^2}{2}\boldsymbol{s}(\phi_t, t) + \sqrt{h}\tilde\sigma \eta_t}_{\Delta \phi}.
\end{equation}
The drift term evolves $\phi_t$ according to the probability flow ODE, along which $\log q_t$ evolves accordingly. However, the corrector step induces an extra discretization error:
\begin{equation}
    \log q_t(\phi_t + \Delta \phi) = \log q_t(\phi_t) + \boldsymbol{s}(\phi_t, t) \Delta\phi + O((\Delta \phi)^2).
\end{equation}
In our framework, the step size can be effectively controlled by the parameter $\tilde\sigma$ which is decoupled from the step size $h$ and the forward noise $\sigma$. In practice, it is common to see only the probability flow ODE ($\tilde\sigma = 0$) used for generation when computing model likelihoods to avoid compounding discretization effects.

Assuming a variable $\phi_t$ evolves according to eq.~\eqref{eq:general_fwd_sde}, then eq.~\eqref{eq:fp_eqn} can be used to solve an ODE for $\log q_t$ along deterministic trajectories in reverse time, which results in
\begin{equation}\label{eq:prob_flow_solution}
    \log q_0(\phi_0) = \log q_1(\phi_1) + \int_0^1 \nabla \cdot \boldsymbol{f}_\theta(\phi_t, t) \, dt,
\end{equation}
where
\begin{equation}
    \boldsymbol{f}_\theta(\phi_t, t) := f(\phi_t, t) - \frac{\sigma^2}{2} \hat{\boldsymbol{s}}_\theta(\phi_t, t).
\end{equation}
We provide the argument for this solution in appendix~\ref{apx:fp_ode_solution}. In general, computing the divergence term inside the integrand of \eqref{eq:prob_flow_solution} is intractable; however, one can make use of the \emph{Skilling-Hutchinson trace estimator}~\cite{Skilling1989, Hutchinson1989}
\begin{equation}
    \nabla \cdot \boldsymbol{f}_\theta \triangleq {\rm Tr}\left[J_{\boldsymbol{f}_\theta}\right] = \mathbb{E}_{\epsilon \sim \mathcal{N}(0, \mathds{1})}\left[\epsilon^\top J_{\boldsymbol{f}_\theta} \epsilon\right].
\end{equation}
to obtain an unbiased estimate for the divergence. 

Equipped with the ability to compute model probabilities, one gains access to two avenues for ensuring exactness in later calculations, which we briefly detail here:
\begin{itemize}
    \item \textbf{Reweighting}: A change of probability measure allows one to reweight~\cite{Noe2019} their observables directly inside the expectation:
    \begin{equation}
        \langle \mathcal{O}(\phi) \rangle_p = \left\langle\mathcal{O}(\phi) \frac{p(\phi)}{q(\phi)}\right\rangle_q .
    \end{equation}
    \item \textbf{Resampling}: By resampling~\cite{Albergo:2019eim} a generated ensemble with an accept / reject step, one can construct an asymptotically exact Markov chain with an acceptance probability given by
    \begin{equation}
        A(\phi' | \phi) = \min\left(1, \frac{p(\phi') q(\phi)}{q(\phi') p(\phi)}\right).
    \end{equation}
\end{itemize}
The method one chooses to perform exact calculations depends on how efficient drawing samples from a trained model is compared to computing observables.

\section{Toy example in 0D}\label{sec:0d_toy_example}
In this section, we illustrate the pipeline of equivariant score-based generative modeling with our symmetric diffusion models using a simple example involving a univariate probability density.

\subsection{$\mathbb{Z}_2$ symmetry}\label{sec:0D_z2_symmetry}
We define a toy theory on a zero-dimensional lattice through the action
\begin{equation}\label{eq:0D_action}
    S(\phi) = \frac{1}{2}m^2\phi^2 + \frac{\lambda}{4!}\phi^4,
\end{equation}
where the variable $\phi$ corresponds to one degree of freedom occupying a single lattice site. Notably, this theory exhibits two distinct phases that depend on the action parameters:
\begin{itemize}
    \item \textit{symmetric phase}: $\frac{\lambda}{m^2} > 0$ (single well potential)
    \item \textit{broken phase}: $\frac{\lambda}{m^2} < 0$ (double well potential)
\end{itemize}
which correspond to the topological phases of the model. In this example, we explore both phases by choosing the following action parameters: $m^2 = \pm 1.0$, $\lambda = 0.4$.

This toy model is clearly $\mathbb{Z}_2$-invariant, since the action in eq.~\eqref{eq:0D_action} enjoys a discrete global symmetry under $\phi \mapsto -\phi$. This means that the samples, modulo random statistical fluctuations, should be distributed symmetrically about $\phi~=~0$. This is in accordance with the general result, derived in section~\ref{sec:symmetries_of_score}; in the case where $\mathcal{G} = \mathbb{Z}_2$, then clearly $\rho(\phi) = -\phi$, so eq.~\eqref{eq:score_fn_transformation_law} reduces to
\begin{equation}
	\boldsymbol{s}(\phi, t) \mapsto -\boldsymbol{s}(\phi, t).
\end{equation}
In other words, the score function for a zero-dimensional scalar field theory with global $\mathbb{Z}_2$ invariance is, in turn, $\mathbb{Z}_2$-equivariant. The $\mathbb{Z}_2$ symmetry of eq.~\eqref{eq:0D_action} is explicit when computing the force:
\begin{equation}
    -\frac{\partial S(\phi)}{\partial \phi} = -m^2 \phi - \frac{\lambda}{3!}\phi^3,
\end{equation}
which is clearly odd in $\phi$. This inspires us to construct a score network which is also $\mathbb{Z}_2$-equivariant such that the diffusion model need not learn this discrete symmetry of the theory, but is endowed with it instead. 

For our score network, we use a simple multilayer perceptron (MLP)~\cite{almeida2020multilayer} with a custom-built antisymmetric sigmoidal activation function, defined in terms of the usual sigmoid function $\sigma(z)$ as
\begin{equation}
    {\rm OddSigmoid}(z) := \frac{1}{2} - \frac{e^{-z}}{1 + e^{-z}} = \sigma(z) - \frac{1}{2}.
\end{equation}
The manifest symmetry in the MLP score networks is displayed in figure~\ref{fig:0D_learned-score_funcs}.
\begin{figure}
    \centering
    \includegraphics[width=1.00\linewidth]{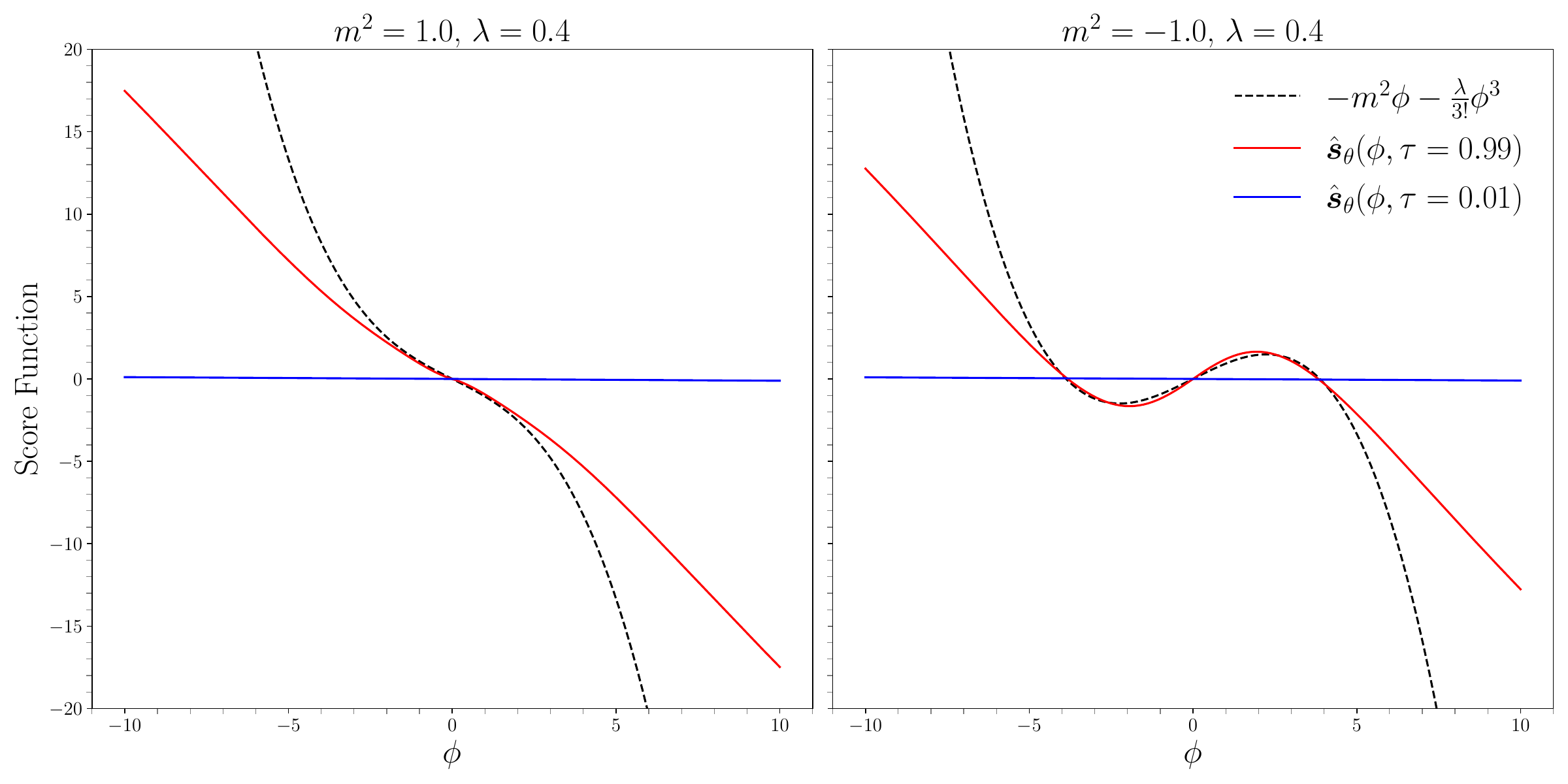}
    \caption{Learned scores as functions of the field variable $\phi$ in the (left) symmetric and (right) broken phases. The blue and red solid lines correspond to the score functions near the start ($\tau  = 0.01$) and end ($\tau  = 0.99$) of the reverse process, respectively. The dotted black line represents the true force, $-\nabla S(\phi)$.}   
    \label{fig:0D_learned-score_funcs}
\end{figure}
Since the score function must learn to denoise data $\phi_t$ conditioned on time $t$, the score network must also be time-dependent. The raw input time would be a scalar $t \in (0, 1)$, but to properly condition a neural network, a more expressive representation is needed. In score-based diffusion models, the general practice is to lift $t$ into a higher-dimensional space through a process called \emph{time embedding}. Different methods exist for embedding sequential information, for instance, sinusoidal positional encodings~\cite{Vaswani2017Attention} or Gaussian Fourier projections~\cite{Tancik2020Fourier}.

Diffusion time is a continuous variable, of which diffusion models must learn complex time-dependent score functions. In light of this, we choose to encode time via Gaussian Fourier projections, where random frequencies $f_i$ are chosen from a Gaussian distribution and fixed according to a predefined embedding dimension $d$. Then the encoding is built as
\begin{equation}
    t \mapsto \left[\sin(2\pi f_1 t), \cos(2\pi f_1 t), ..., \sin(2\pi f_d t), \cos(2\pi f_d t)\right].
\end{equation}
Projecting times into a random, higher-dimensional Fourier basis improves the universal function approximation capacity of the model to learn arbitrary smooth functions of time. This method of time embedding in diffusion models has been seen to perform better than the geometric progression of frequencies in sinusoidal positional encodings~\cite{Karras2022Elucidating}.

\subsection{Training and inference}
For training, we use HMC to prepare ensembles of 16,384 independently generated samples in both the single and double well phases. We train the diffusion models in this example with the Adam optimizer~\cite{Kingma2015Adam} for 250 epochs with a learning rate of 0.0001 and batch sizes of 256. The training proceeds as outlined in Algorithm~\ref{alg:train_0D_dm}.

For inference, we generate new samples by solving the reverse ODE using a numerical Euler integrator with a step size of 0.002 (500 time steps). In order to illustrate sample quality, we also employ a similar Euler-Maruyama integrator to solve the reverse SDE, noting that likelihood calculation is best performed in tandem with deterministic generation. We display samples from our trained models compared to the HMC data on which they were trained for both phases in figure~\ref{fig:0D_sample_histograms}.
\begin{figure}
    \centering
    \includegraphics[width=0.95\linewidth]{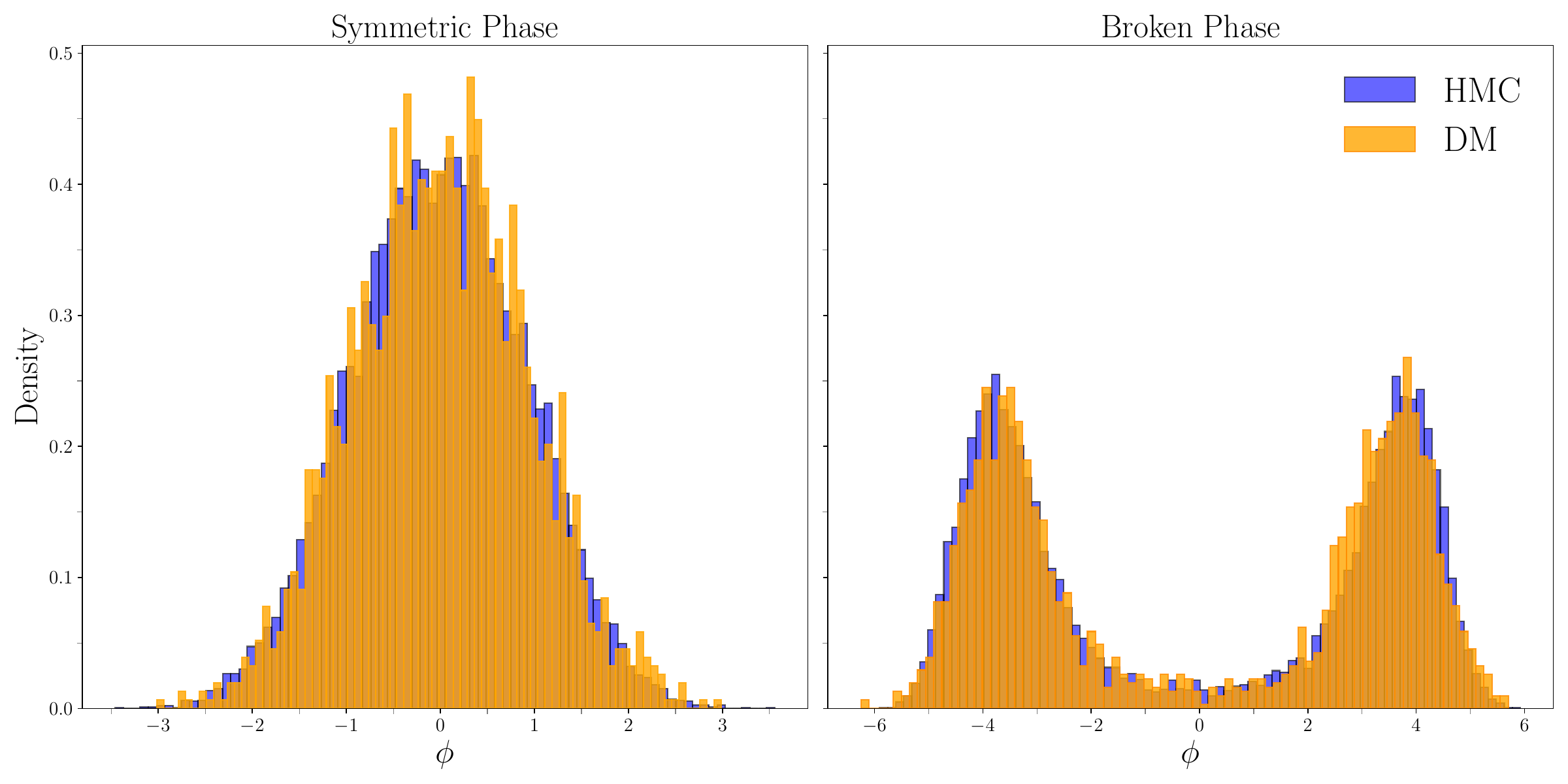}
    \caption{Normalized histograms with 80 bins over 16,384 samples of training data (HMC) and 2,048 samples of diffusion model (DM) generated data for (left) the symmetric and (right) broken phases. The coupling strength was set to $\lambda = 0.4$ for both phases, and the squared mass is $m^2 = 1.0$ and $m^2 = -1.0$ for the two phases, respectively.}
    \label{fig:0D_sample_histograms}
\end{figure}

As shown in figure~\ref{fig:0D_learned-score_funcs}, the models learn the score functions very well within regions of $\phi$ centered around the local minima of the action, since these correspond to areas of high probability. The regions beyond the middle zone where the score network seems to fail to learn the force accurately correspond to areas of much lower likelihood, meaning much fewer samples of data exist there. Because they are manifestly odd in $\phi$ and must therefore vanish at $\phi=0$, the models approximate the score functions well near $\phi = 0$. Non-odd score networks are also capable of good performance, though we prefer to work with $\mathbb{Z}_2$-equivariant networks in order to exactly replicate the theory's symmetry and because we observe higher sample quality with symmetric networks, as discussed below in table~\ref{tab:model_quality_metrics}.

At each intermediate time step $\tau \in [0, 1]$ along the reverse diffusion process, the learned score function defines, up to additive constants, an effective action $S_\tau^{\rm eff}(\phi)$ corresponding to the learned density at that time. For this simple one-dimensional example, the effective action can be easily computed and visualized by cumulatively integrating the learned score functions with respect to $\phi$ over fixed intervals $[\phi_{\rm min}, \phi] \subset \mathbb{R}$:
\begin{equation}
	S^{\rm eff}_\tau(\phi) = -\int_{\phi_{\rm min}}^{\phi} \hat{\boldsymbol{s}}_\theta(\tilde{\phi}, \tau) \, d \tilde{\phi}.
\end{equation}
To illustrate, we plot the trajectory of learned effective actions for the broken phase in figure~\ref{fig:0D_broken_effective_action}.
\begin{figure}
    \centering
    \includegraphics[width=0.75\linewidth]{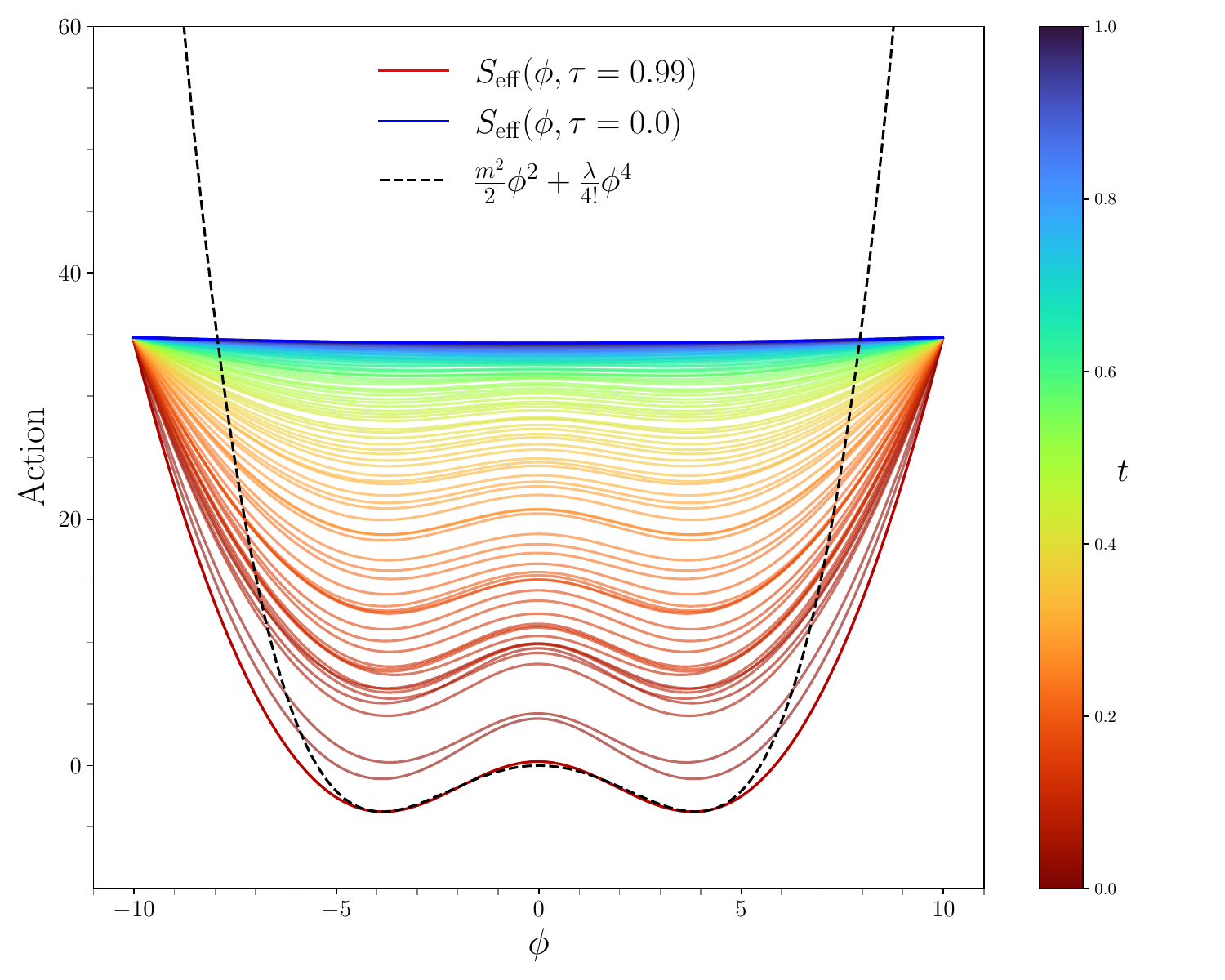}
    \caption{Evolution of the effective action in the broken phase as a function of the field $\phi$ over reverse diffusion time. The colorbar is parameterized by (forward) diffusion time $t \equiv 1 -\tau$.}
    \label{fig:0D_broken_effective_action}
\end{figure}
Through the effective action $S_\tau^{\rm eff}(\phi) := -\log q_{\tau}(\phi)$, the learned score function also implicitly defines (up to normalization) an effective probability density of the generated samples at any point $\tau$ in the reverse process. More visualization of the learned effective actions along with their approximate densities over diffusion time $\tau$ and the field $\phi$ are seen in figure~\ref{fig:0D_action_and_probability_surfaces}.
\begin{figure}
    \centering
    \includegraphics[width=1.00\linewidth]{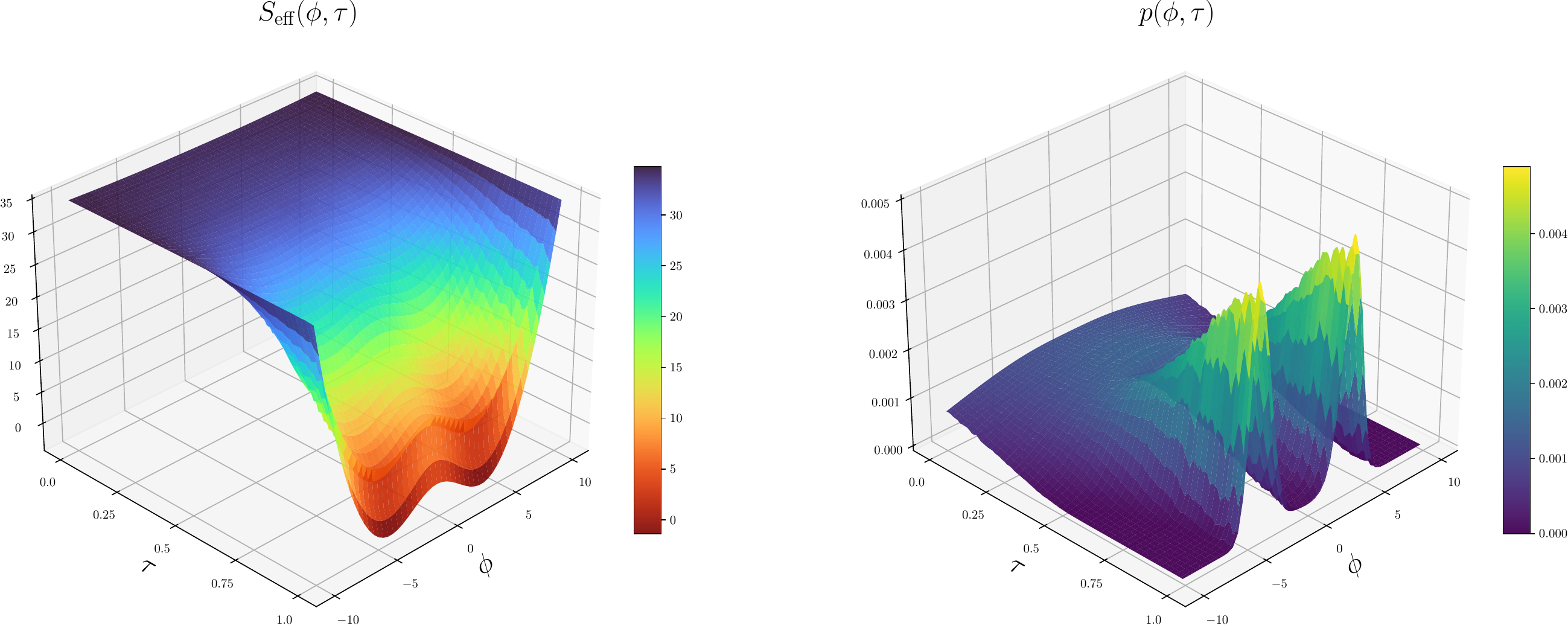}
    \caption{Evolution of (left) the effective action $S_{\rm eff}(\phi, t)$ and (right) the resulting density $p(\phi, t) \propto e^{-S_{\rm eff}(\phi, t)}$ over reverse diffusion time $\tau$ as a function of the field $\phi$ in the broken phase.}
    \label{fig:0D_action_and_probability_surfaces}
\end{figure}

\subsection{Effective sample sizes}\label{sec:0D_ess}
At this point, given the ease of interpretation of this toy example, we review the concept of  effective sample size (ESS)~\cite{Kong1994Sequential}. Given a target distribution $p$ and a model distribution $q$, the ESS is defined as
\begin{equation}\label{eq:ess_from_probs}
    {\rm ESS} := \frac{\left(\frac{1}{N}\sum_{i=1}^N p[\phi_i] / q[\phi_i]\right)^2}{\frac{1}{N}\sum_{i=1}^N \left(p[\phi_i] / q[\phi_i]\right)^2},
\end{equation}
where $N$ is the number of samples in an ensemble of configurations $\{\phi_i\}_{i=1}^N$. 
Given that the target distribution is defined by a physical action $S[\phi] =: -\log p[\phi]$ and the model likelihood induces an effective action $S_{\rm eff}[\phi] := -\log q[\phi]$, eq.~\eqref{eq:ess_from_probs} can also be written as
\begin{equation}
    {\rm ESS} = \frac{\left(\frac{1}{N} \sum_{i=1}^N w(\phi_i)\right)^2}{\frac{1}{N} \sum_{i=1}^N w(\phi_i)^2},
\end{equation}
where the $w(\phi_i)$ represent weighting factors given by
\begin{equation}
    w(\phi_i) = \exp\left(-S[\phi_i] + S_{\rm eff}[\phi_i]\right).
\end{equation}
These weighting factors allow us to reframe the discussion in the context of importance sampling, where samples from $q$ are used to approximate expectations under $p$:
\begin{equation}
    \mathbb{E}_p\left[\mathcal{O}(\phi)\right] = \int \mathcal{D}\phi \; \mathcal{O}(\phi) p[\phi] = \int \mathcal{D}\phi \; \mathcal{O}(\phi) w(\phi) q[\phi].
\end{equation}
When the weights $w_i \equiv w(\phi_i)$ are such that only a few samples dominate, then this estimate becomes poor, resulting in an effective number of samples much smaller than the total number $N$. In this way, we view the ESS as the fraction of generated configurations that are equivalent to statistically independent samples. Furthermore, the ESS takes values in $[0, 1]$, where ${\rm ESS} = 1$ corresponds to a perfect model where all generated configurations are weighted equally and are truly i.i.d. from the target. We obtain effective sample sizes of 99.4\% and 94.6\% in the symmetric and broken phases, respectively.

\section{Application to scalar $\phi^4$ theory in 2D}\label{sec:phi4_application}
The scalar $\phi^4$ theory is often cited as the simplest interacting quantum field theory that still captures many of the key nontrivial features of more complex theories. For example, many systems near critical points in statistical field theory behave like the $\phi^4$ theory~\cite{Zinn-Justin:1989rgp}, making it a prototype for phase transitions and studying universality. The $\phi^4$ theory is also an important playground for studying non-perturbative dynamics on the lattice, including spontaneous symmetry breaking and triviality in 4D~\cite{Montvay:1994cy}. It often serves as a testbed for algorithm development, as many lattice algorithms are first tested on $\phi^4$ before being applied to more complex theories~\cite{Gattringer:2010zz}.

The lattice action for the scalar $\phi^4$ theory can be written as
\begin{equation}\label{eq:phi4_lattice_action}
    S[\phi] = \sum_x \left[-\kappa \sum_\mu \phi(x) \phi(x + \hat{\mu}) + \left(1 - 2\lambda\right)\phi(x)^2 + \lambda \phi(x)^4 \right],
\end{equation}
which, in addition to the local potential terms from the zero-dimensional example, now includes extended nearest-neighbor interaction terms that allow the field to propagate through spacetime. Notably, \eqref{eq:phi4_lattice_action} exhibits a global $\mathbb{Z}_2$ invariance as well, meaning that its force (and therefore the score function), following the discussion in section~\ref{sec:0D_z2_symmetry}, should also be $\mathbb{Z}_2$-equivariant.

\subsection{Network architecture}\label{sec:phi4_network_architecture}
We borrow a well-known neural network architecture from computer vision applications, known as the U-Net, which was first used for semantic segmentation in biomedical imaging~\cite{Ronneberger2015UNet}. The U-Net consists of two sequential components: a contracting (encoding) path and an expanding (decoding) path. The encoder consists of several convolutional layers which gradually downsample the input image while increasing the number of channels; afterwards, the decoder upsamples the image again using transposed convolutions while simultaneously decreasing the number of channels back down to 1. Additionally, residual (skip) connections in the network concatenate data in the decoder with the intermediate data from the encoding path at the same spatial resolution. 
\begin{figure}
    \centering
    \includegraphics[width=1.00\linewidth]{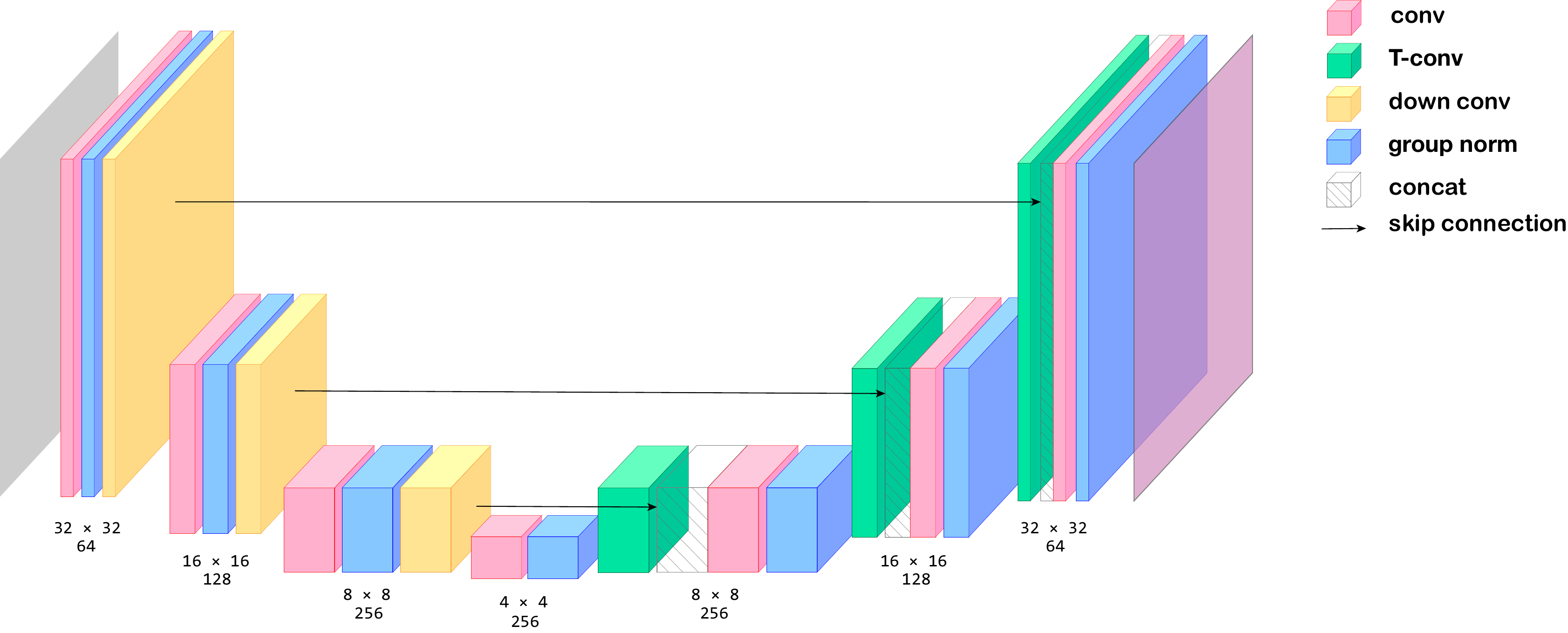}
    \caption{Schematic of the U-Net architecture used in our experiments to parameterize the score network, in the example of a $32 \times 32$ input. At each stage, convolutional blocks downsample the resolution of the configurations while increasing the intermediate channels, after which transposed convolutions perform the reverse operation. Beneath each convolutional block, the top pair of numbers represents the shape of the data at that resolution and the bottom number is the corresponding number of channels. The input and output are assumed to have a single channel, and are shaped $(\cdot, 1, L, L)$.}
    \label{fig:unet_diagram}
\end{figure}

In figure~\ref{fig:unet_diagram}, we display an example diagram of our U-Net architecture that showcases how the lattice geometry evolves as it moves through the network. In each convolutional block, a convolutional layer is applied which preserves the data's dimensions while increasing the number of channels. This is followed by the application of a group normalization layer~\cite{Wu2018GroupNorm} and another convolutional layer which downsamples the lattice by a factor of two in each direction. In the expanding path, we use transposed convolutions to restore the spatial resolution of the data.

In two spatial dimensions, scalar field configurations can be thought of as images; here, the `pixels' correspond to the field values at each of the corresponding lattice sites $\phi(x) \in (-\infty, \infty)$, and the image width/height are the spatial lattice extents $L \times L$. Similarly, instead of RGB channels, we have a single channel. In figure~\ref{fig:phi4_reverse_lattices}, we illustrate an example of a $32 \times 32$ $\phi^4$ configuration evolving along the reverse diffusion process in similar fashion to a 2D image. Unlike images, however, field configurations and the lattices on which they reside are endowed with certain physical symmetries, which we show can be built in to score-based diffusion models via the design of the score network itself. In section~\ref{sec:impact_of_model_symmetry}, we detail the beneficial impact that respecting these symmetries has on our model quality.

Our current implementation of a $\mathbb{Z}_2$-equivariant network is to take a generic (non-equivariant) U-Net and anti-symmetrize it:
\begin{equation}\label{eq:antisymmetric_score_func}
    \overline{\boldsymbol{s}}_\theta(\phi, t) = \frac{1}{2}\left(\hat{\boldsymbol{s}}_\theta(\phi, t) - \hat{\boldsymbol{s}}_\theta(-\phi, t)\right)
\end{equation}
such that the resulting score function $\overline{\boldsymbol{s}}_\theta$ is odd by construction. This enables us to use a conventional, unrestricted implementation of a U-Net (with circularly padded convolutional layers and SiLU activation functions) to parameterize a still $\mathbb{Z}_2$-equivariant score. In practice, this requires twice as many score network evaluations during inference. Such a $\mathbb{Z}_2$-equivariant U-Net can alternatively be constructed by using layers which respect this symmetry and odd activation functions, as we did in section~\ref{sec:0D_z2_symmetry}. 

The theory also respects an important spacetime symmetry. Namely, in LQFT simulations, the lattices obey periodic boundary conditions as a means of replicating an infinite spacetime continuum while avoiding edge artifacts. To take periodic boundary conditions into account, we use circular padding in the convolutional layers of our score network. Our numerical experiments demonstrate that incorporating this explicitly, as opposed to using traditional padding, enhances model quality.

\subsection{Comparison with HMC}\label{sec:phi4_comparison_with_hmc}
For all of our numerical experiments, we focus on the broken phase, where $\frac{\lambda}{m^2}~<~0$. The specific parameters we choose are $m^2 = -2.68$, $\lambda = 0.5$, $\kappa = 0.67$. They correspond to the broken phase, but not overly far from the critical point.
Our training dataset consists of 16,384 configurations which were generated in parallel using 200 HMC trajectories for thermalization (with hot initialization). Each HMC trajectory has unit length, and consists of 20 leapfrog steps. We train for 250 epochs with a batch size of 128 at a learning rate of 0.0001.
\begin{figure}
    \hspace{-1.0cm}
    \includegraphics[width=1.00\linewidth]{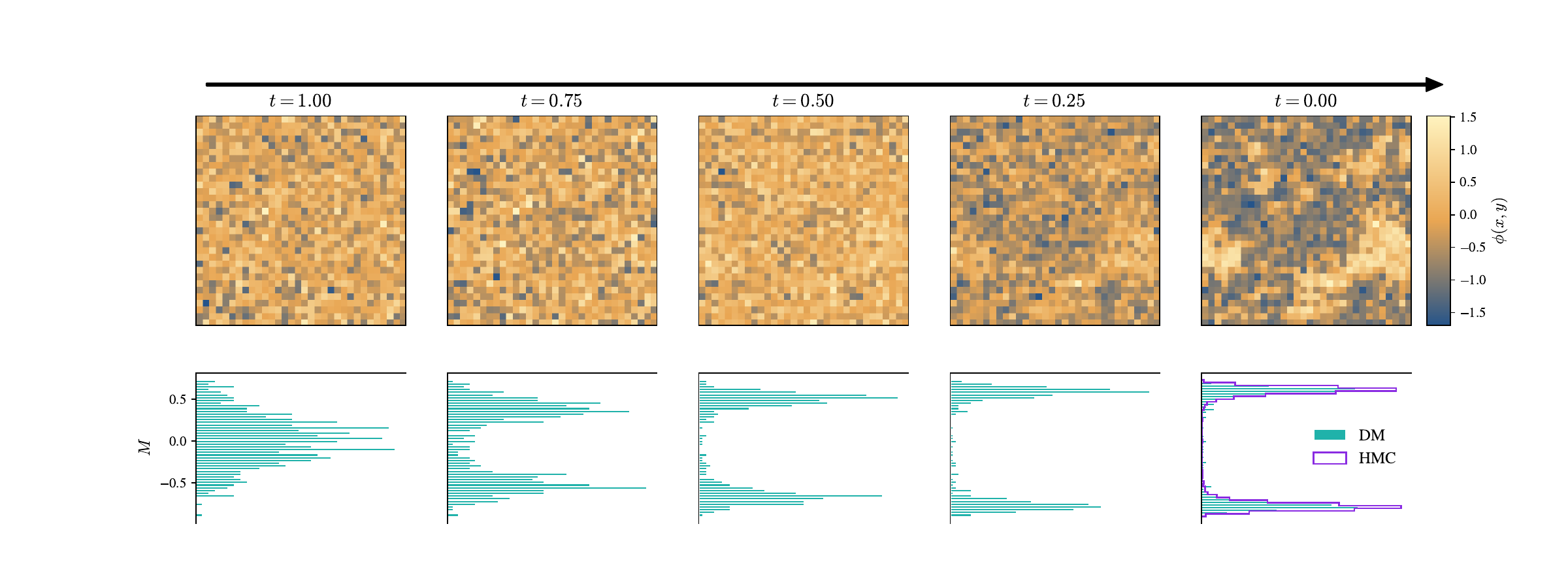}
    \caption{The top row pictures a sample $\phi^4$ configuration deep in the broken phase on a $32 \times 32$ lattice evolving over reverse diffusion time from left to right. The sample progresses from pure noise to a realistic configuration with local correlations. The bottom row displays the evolution of the magnetization ($M$) histograms over reverse diffusion time, and the last panel compares the resulting diffusion model (DM) density with the density computed on an HMC-generated dataset. This $32 \times 32$ example is meant as illustration to provide a higher resolution image, but our numerical experiments in this section focus on $8 \times 8$ and $16 \times 16$ lattices.}
    \label{fig:phi4_reverse_lattices}
\end{figure}

For inference, we use a fourth order Runge-Kutta (RK4) integrator~\cite{Runge1895, Kutta1901} to solve the probability flow ODE with a step size of 0.002 (500 time steps) to generate new samples $\phi$. We simultaneously solve the probability flow ODE for the model log-likelihood $\log q[\phi]$ using the Skilling-Hutchinson estimator with 10 random Rademacher~\cite{Hitczenko1994} vectors for the divergence estimate. Together, the backwards process amounts to solving the following system of ODEs:
\begin{equation}
    \frac{d}{dt}
    \begin{pmatrix}
        \phi_t \\ \log q_t[\phi_t]
    \end{pmatrix}
    = 
    -\begin{pmatrix}
        \frac{1}{2}\sigma^{2t} \hat{\boldsymbol{s}}_\theta (\phi_t, t) \\ \nabla \cdot  [\sigma^{2t}\hat{\boldsymbol{s}}_\theta(\phi_t, t)]
    \end{pmatrix}.
\end{equation}

We quantitatively validate our models by computing physically relevant quantities and comparing them either to previously-existing estimates computed on Monte Carlo data, their exactly known values (when available), or both. We focus on the following physical observables relevant to the $\phi^4$ theory for validation (where $V = L^2$ denotes the lattice volume):
\begin{itemize}
    \item Magnetization (mean-field value):
    \begin{equation}\label{eq:magnetization}
        M = \frac{1}{V}\sum_x \phi(x)
    \end{equation}
    \item Magnetic susceptibility:
    \begin{equation}\label{eq:2pt_suscept}
        \chi = V \left( \langle M^2 \rangle - \langle M \rangle^2 \right)
    \end{equation}
    \item Modified magnetic susceptibility:
    \begin{equation}\label{eq:mod_2pt_suscept}
        \chi^\prime = V \left( \langle M^2 \rangle - \langle |M| \rangle^2 \right)
    \end{equation}
    \item Binder cumulant \cite{Binder1981}:
    \begin{equation}\label{eq:binder_cumulant}
        U_L = 1 - \frac{1}{3}\frac{\langle M^4 \rangle}{\langle M^2 \rangle^2}\,.
    \end{equation}
\end{itemize}
We introduced $\chi^\prime$ so that in the broken-phase region one can effectively calculate the magnetic susceptibility for each sector separately~\cite{Kotze2008, DelDebbio:2021vyx}.

Our testing dataset is comprised of $1024$ chains of configurations generated independently using HMC. 
Each HMC chain contains 100 configurations generated in sequence:
the first 100 HMC iterations are for thermalization, followed by an additional 100 HMC iterations that we save. Each HMC trajectory contains 20 leapfrog steps of step size 0.05.
For estimating autocorrelations we use all of the testing dataset,
while for computation of observables in table~\ref{tab:phi4_obs} we only use the last configuration of each set, i.e., 1024 configurations.

We show values of the observables \eqref{eq:magnetization}-\eqref{eq:binder_cumulant} computed across the diffusion and HMC ensembles in Table~\ref{tab:phi4_obs}. We observe that, after reweighting, observables computed on the diffusion model data agree with both training and testing data.
\begin{table}
    \centering
    \begin{tabular}{l l |
                    S[table-format=1.4(4)]
                    S[table-format=1.4(4)]
                    S[table-format=1.4(4)]}
        \specialrule{.1em}{0pt}{0pt}
        \rowcolor{gray!10}
        \textbf{Data Set} & $N_{\rm conf}$ & {$\langle |M| \rangle$} & {$\chi'$} & {$U_L$} \\
        \specialrule{.05em}{0pt}{0pt}
        DM (Raw)            & 1024  & 0.5781(72) & 3.38(15) & 0.535(24) \\
        DM (Reweighted)     & 1024  & 0.6289(72) & 3.35(20) & 0.559(26) \\
        HMC (Testing)       & 1024  & 0.6143(74) & 3.66(16) & 0.547(22) \\
        HMC (Training)      & 16384 & 0.6188(19) & 3.63(04) & 0.548(05) \\
        \specialrule{.1em}{0pt}{0pt}
    \end{tabular}
    \caption{Mean values and standard errors for the observables defined in Eqs.~\eqref{eq:magnetization}, \eqref{eq:mod_2pt_suscept}, and \eqref{eq:binder_cumulant} computed on ensembles of $8\times8$ $\phi^4$ configurations. The comparison is between generated diffusion model (DM) data before and after reweighting with the HMC training and testing sets. The size of each dataset is labeled by the number of configurations $N_{\rm conf}$.}
    \label{tab:phi4_obs}
\end{table}

The autocorrelation function for an observable $\mathcal{O}$ is defined as
\begin{equation}
    \rho_\mathcal{O}(t) := \langle \mathcal{O}_{t_0} \mathcal{O}_{t_0 + t}\rangle - \langle \mathcal{O}_{t_0}\rangle \langle\mathcal{O}_{t_0 + t}\rangle,
\end{equation}
where $t_0$ is the source time and $t$ is the temporal separation along the Markov chain. We compare the normalized autocorrelation functions,
$\overline{\rho}_{\mathcal{O}}(t) := \rho_\mathcal{O}(t) / \rho_\mathcal{O}(0)$,
for a resampled diffusion-generated ensemble with HMC testing data in figure~\ref{fig:normalized_autocorr_M}.
\begin{figure}
    \centering
    \includegraphics[width=1.0\linewidth]{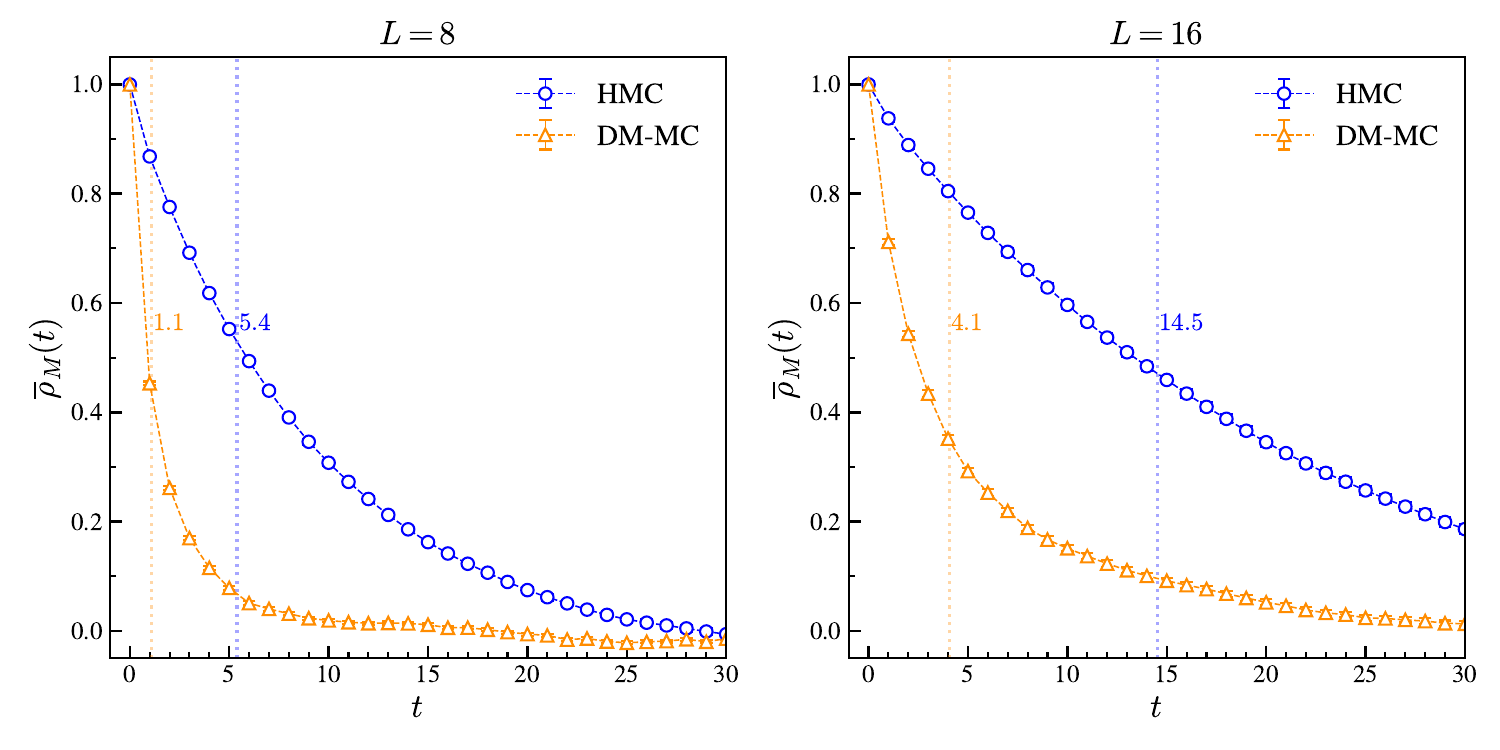}
    \caption{Normalized autocorrelation functions for the magnetization $M$, computed on $8 \times 8$ (left) and $16 \times 16$ (right) lattices from an HMC dataset and a resampled diffusion model dataset (DM-MC). The dotted lines represent the integrated autocorrelation times $\tau_{\rm int}$ for each of the two ensembles.}
    \label{fig:normalized_autocorr_M}
\end{figure}
We also compute the \emph{integrated autocorrelation times} as in \cite{Wolff:2003sm} for $M$ on both datasets. The integrated autocorrelation time of a Markov chain of measurements of an observable $\mathcal{O}$ is defined as
\begin{equation}\label{eq:tau_int}
    \tau_{\rm int}(\mathcal{O}) := \frac{1}{2} + \lim\limits_{t_{\rm max} \to \infty}\sum_{t=1}^{t_{\rm max}} \overline{\rho}_{\mathcal{O}}(t).
\end{equation}

In figure~\ref{fig:logp_vs_logq_hists}, the HMC dataset is generated using 100 independent thermalization trajectories and an additional 20,000 HMC steps, where every 20 samples are saved for computing the autocorrelation. Each HMC tracjectory consisted of 10 leapfrog steps at a step size of 0.1. Similarly, the diffusion dataset was generated by sampling from the trained model
using the Euler method with 200 ODE steps. 
We generate 1000 proposed samples and perform an accept/reject step on them to make a Markov chain with 1000 samples. We repeat the described procedure 100 times and illustrate the average in the figure.
Using $t_{\rm max} = 100$, we find $\tau_{\rm int}^{\rm HMC}(M) = 5.4$ and $\tau_{\rm int}^\text{DM-MC}(M) = 1.1$ at $L=8$; at $L=16$ we obtain $\tau_{\rm int}^{\rm HMC}(M) = 14.5$ and $\tau_{\rm int}^\text{DM-MC}(M) = 4.1$ evaluated on the average of the autocorrelation times over independent chains.

We note that while 200 leapfrog steps per configuration (10 leapfrog steps per trajectory, and 20 trajectories between saved configurations) are used for the HMC dataset, the diffusion model data in the comparison seen in figure~\ref{fig:normalized_autocorr_M} was generated with 200 ODE steps. This comparison indicates that the diffusion model reduces autocorrelations for the lattice sizes and observables studied. We emphasize that these comparisons are valid only for this lattice size, and we leave a more detailed study of the volume scaling to future work.

An additional computational expense for the diffusion model arises from evaluating the U-Net.
The average cost of each U-Net evaluation on batched data (1000 samples) increases by about 45\% from $0.009\:s$ to $0.013\:s$ on a single NVIDIA GH200 GPU for $L=8$ and $L=16$ lattices, respectively. However, the cost of the Hutchinson estimates, which dominates the total cost, is almost the same for both lattice sizes: $0.302\:s$, on average. Therefore, one step of diffusion generation takes nearly the same amount of time for both lattice sizes. For 200 ODE steps, this gives a total cost for the full reverse process of about $200 \times (0.01 + 0.30) = 62\:s$.
For HMC, the cost of 10 leapfrog steps is almost the same for both lattice sizes, taking about $0.0033\:s$ on average. As the lattice size increases, the cost of a leapfrog step should also increase, but this is obfuscated here by the overhead cost of function calls. Overall, the cost of 20 trajectories with 10 leapfrog steps each is roughly $0.066\:s$. 
For the diffusion model, it takes $62\:s$ to generate 1000 biased configurations simultaneously, corresponding effectively to 1000 $\times$ ESS independent samples. By contrast, the HMC sampling must be done sequentially; generating a single chain of 1000 HMC samples with 20 trajectories in between saved configurations requires about $66\:s$ on average.
    
Additionally, we can choose more precise integrators to decrease the number of inference steps when sampling from a diffusion model. In contrast, HMC requires the introduction of auxiliary momenta and one must always use a symplectic integrator, such as the leapfrog method.

\subsection{Influence of symmetry on model expressivity}\label{sec:impact_of_model_symmetry}
A simple measure of model quality is to compare the true action $S[\phi]$ with the learned effective action $S_{\rm eff}[\phi]$, each evaluated on the same batch of generated configurations. If the model is trained well, one should find $S_{\rm eff}[\phi] \approx S[\phi] + c$, where $c$ is an irrelevant normalization constant. Quantitatively, we can judge this alignment through the coefficient of determination $R^2$ between the two actions. In figure~\ref{fig:logp_vs_logq_hists}, we show the learned action in strong agreement with the true action on an $8 \times 8$ lattice.
\begin{figure}
    \centering
    \includegraphics[width=1.00\linewidth]{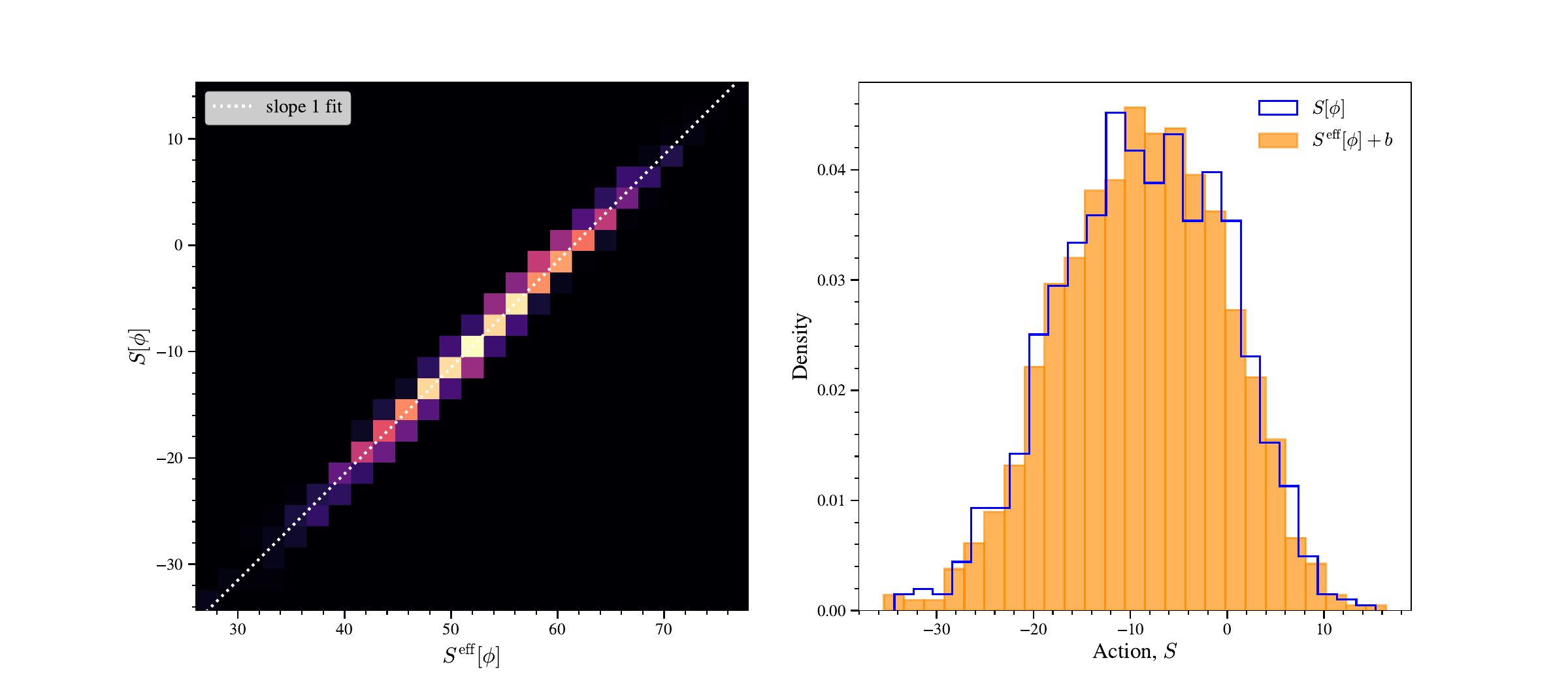}
    \caption{(Left) 2D histograms over 25 bins showing the distributions of the true action $S[\phi]$ against the effective action $S_{\rm eff}[\phi]$ computed on an ensemble of 1024 $\phi^4$ configurations sampled from a trained diffusion model near the critical point. The white dotted line represents a slope 1 fit with affine shift computed as the average difference between the actions: $b = \overline{S[\phi] - S^{\rm eff}[\phi]}$. The correlation between the true and effective action is observed to be high; the corresponding $R^2$ coefficient is 0.9951. (Right) Histograms of the true action and effective action shifted by the fit constant $b$.}
    \label{fig:logp_vs_logq_hists}
\end{figure}

Another measure of model quality is the effective sample size (ESS). As defined and discussed in section~\ref{sec:0D_ess}, the ESS contains information about how importance weights are distributed across a generated ensemble. Moreover, the ESS is also correlated with the acceptance rate of a Metropolis accept/reject chain applied to the ensemble. The probability of accepting the next `proposal' sample $\phi_i$ given the previous sample $\phi_{i-1}$ is given by
\begin{equation}
    P_{\rm accept}(\phi_{i} | \phi_{i-1}) = \min\left(1, \frac{p[\phi_i] / p[\phi_{i-1}]}{q[\phi_i] / q[\phi_{i-1}]} \right) = \min \left(1, \frac{w_i}{w_{i-1}}\right),    
\end{equation}
where the $w_i$ are sample importance weights given by
\begin{equation}
    w_i = \exp\left( -S[\phi_i] + S_{\rm eff}[\phi_i]\right).
\end{equation}
A low ESS indicates that the weights are heavily unbalanced, so the ratio of successive weights will often be small, leading to a low MCMC acceptance rate. Conversely, for a relatively higher ESS, the weights will be more stable and evenly spread, thus we expect higher acceptance rates when resampling our ensemble. 

Notably, both the ESS and acceptance rate are highly sensitive to the reweighting factors $w_i$, whereas the $R^2$ coefficient is less sensitive to the overall spread in the $w_i$. Moreover, these quantities display strikingly different scaling; we consistently obtain $R^2~>~0.95$ even when ESS and acceptance rates fall well below 10\%. For this reason, we maintain that though the $R^2$ provides some heuristic insight into model performance, the ESS and acceptance rate are more stringent benchmarks for the quality of generated samples.

As discussed in section~\ref{sec:phi4_network_architecture}, the $\phi^4$ theory is $\mathbb{Z}_2$-invariant, motivating the use of a $\mathbb{Z}_2$-equivariant score network. Additionally, the periodicity of the lattice inspired us to use circular convolutional layers to respect the spacetime symmetry of the theory as well. Next we investigate the impact that these choices have on model performance and the quality of the samples produced therefrom.

\begin{table}
    \centering
    \setlength{\tabcolsep}{4.5pt}
    \renewcommand{\arraystretch}{1.2}
    \footnotesize
    \arrayrulecolor{black}
    \begin{tabular}{c|cccc|cccc}
        \specialrule{.12em}{0pt}{0pt}
        \rowcolor{gray!20}
        \textbf{Sym.} & \multicolumn{4}{c|}{\textbf{No Regularization}} & \multicolumn{4}{c}{\textbf{Force Reg.} ($c_0 = 0.1$)} \\
        \specialrule{.05em}{0pt}{0pt}
        & NFE $(\downarrow)$ & ESS $(\uparrow)$ & AR $(\uparrow)$ & KL $(\downarrow)$
        & NFE $(\downarrow)$ & ESS $(\uparrow)$ & AR $(\uparrow)$ & KL $(\downarrow)$ \\
        \specialrule{.12em}{0pt}{0pt}

        \rowcolor{gray!10}
        \multicolumn{9}{c}{\textit{Euler Integrator} ($N_{\rm step} = 500$)} \\
        \specialrule{.05em}{0pt}{0pt}
        $\{e\}$                          &  500 & 0.103(15) & 0.284(12) & -59.58(05) &  500 & 0.282(23) & 0.434(16) & -60.22(16)  \\
        $\mathbb{Z}_2$                   & 1000 & 0.201(22) & 0.382(13) & -60.07(04) & 1000 & 0.558(26) & 0.597(12) & -60.57(02) \\
        $\mathbb{T}$                     &  500 & 0.280(21) & 0.462(14) & -60.33(03) &  500 & 0.545(20) & 0.602(10) & -60.59(01) \\
        $\mathbb{Z}_2 \times \mathbb{T}$ & 1000 & 0.527(44) & 0.593(22) & -60.60(03) & 1000 & 0.672(21) & 0.671(13) & -60.70(01) \\
        \specialrule{.12em}{0pt}{0pt}

        \rowcolor{gray!10}
        \multicolumn{9}{c}{\textit{Runge--Kutta (RK4) Integrator} ($N_{\rm step} = 125$)} \\
        \specialrule{.05em}{0pt}{0pt}
        $\{e\}$                           & 500 & 0.131(18) & 0.304(15) & -59.85(10)  & 500 & 0.264(20) & 0.430(14) & -60.77(14) \\
        $\mathbb{Z}_2$                   & 1000 & 0.209(18) & 0.396(16) & -60.53(11) & 1000 & 0.531(24) & 0.573(12) & -61.16(16) \\
        $\mathbb{T}$                      & 500 & 0.289(28) & 0.450(19) & -60.63(08)  & 500 & 0.522(33) & 0.602(11) & -61.03(10) \\
        $\mathbb{Z}_2 \times \mathbb{T}$ & 1000 & 0.579(34) & 0.613(20) & -61.09(08) & 1000 & 0.597(28) & 0.653(08) & -61.24(18) \\
        \bottomrule
    \end{tabular}

    \caption{Comparison of ESS, KL divergences, and MCMC acceptance rates (AR) for U-Net symmetry variants evaluated on 1024 diffusion-generated $8 \times 8$ $\phi^4$ configurations in the broken phase (repeated 10 times for error estimation). Rows are grouped by ODE integrator (\textit{Euler} and \textit{RK4}), while columns compare standard training (No Reg.) versus force-regularized training ($c_0 = 0.1$). The NFE subcolumn denotes the total number of score function evaluations during the ODE solve. The symbols in the Sym. column are shorthand notation for the symmetry group respected by the corresponding U-Net variant. The arrows ($\uparrow \downarrow$) indicate whether the corresponding quantity is preferred to take higher or lower values.}
    \label{tab:model_quality_metrics}
\end{table}

There are four possible symmetry combinations: no symmetry ($\{e\}$), $\mathbb{Z}_2$ symmetry only, periodic translational symmetry ($\mathbb{T}$), as well as combined $\mathbb{Z}_2$ and $\mathbb{T}$ symmetry. We compare the impacts of including each of these symmetry combinations on the ESS, KL divergence, and acceptance rate for our trained models over ten repeated experiments performed with the same set of parameters during training and sampling. Consistently, we observe that including each symmetry yields better performance across all numerical metrics, with the highest level of performance being obtained when both symmetries are included. We summarize these results in the upper left quadrant of table~\ref{tab:model_quality_metrics} for the specific case of an Euler integrator with $500$ inference steps for $L=8$. We display the same set of results for the $16 \times 16$ lattices in table~\ref{tab:model_quality_metrics_L=16}.

\begin{table}
    \centering
    \setlength{\tabcolsep}{4.5pt}
    \renewcommand{\arraystretch}{1.2}
    \footnotesize
    \arrayrulecolor{black}
    \begin{tabular}{c|cccc}
        \specialrule{.12em}{0pt}{0pt}
        \rowcolor{gray!10}
        \textbf{Sym.} & NFE $(\downarrow)$ & ESS $(\uparrow)$ & AR $(\uparrow)$ & KL $(\downarrow)$ \\
        \specialrule{.05em}{0pt}{0pt}
        $\{e\}$                          &  500 & 0.025(05) & 0.100(14) & -238.83(15) \\
        $\mathbb{Z}_2$                   & 1000 & 0.092(14) & 0.232(17) & -240.16(08) \\
        $\mathbb{T}$                     &  500 & 0.271(21) & 0.370(17) & -240.79(08) \\
        $\mathbb{Z}_2 \times \mathbb{T}$ & 1000 & 0.310(33) & 0.402(17) & -241.04(06) \\
        \specialrule{.12em}{0pt}{0pt}
    \end{tabular}
    \caption{Same as the upper left quadrant of table~\ref{tab:model_quality_metrics}, but for $L=16$.}
    \label{tab:model_quality_metrics_L=16}
\end{table}

Furthermore, the force guidance scheme detailed in eq.~\eqref{eq:force_regularization} provides a family of possible training regimes, parameterized continuously by the regularization coefficient $c_0$. We observe that, when tuned properly, including the force guidance term dramatically improves model quality across all choices of symmetry when compared with models trained using traditional score matching ($c_0 \equiv 0$). In table~\ref{tab:model_quality_metrics}, we compare standard training and force-guided training with $c_0 = 0.1$ on $8 \times 8$ lattices. 

Finally, we emphasize that generating these samples with an RK4 integrator was one choice out of many possible methods to solve the probability flow ODE. In practice, diffusion models enjoy a plethora of sampling methods for improving predictions and reducing discretization errors. For example, higher-order integrators may be used to solve the differential equations with better precision. Other sampling techniques involving predictor-corrector methods or adjustmment techniques, such as the Metropolis-adjusted Langevin algorithm (MALA)~\cite{RobertsTweedie1996Langevin}, are also available. 

For this study, we limited our scope to Euler and RK4 integrators for generating samples. The upper and lower panels of table~\ref{tab:model_quality_metrics} display a comparison between model quality when using each of these ODE solvers. These effects must simultaneously take into account the computational cost that each integration technique incurs. We measure this cost with the number of function evaluations (NFE), which counts how many times the score network must be evaluated during each integration step and multiplies this count by the total number of ODE solver iterations. The baseline is 1 evaluation for Euler and 4 evaluations for RK4 (per integration step), but these are doubled for the $\mathbb{Z}_2$-equivariant network due to the antisymmetrization. On average, we observe comparable performance in the three model quality metrics when using the RK4 integrator with four times fewer steps than the Euler integrator.

\section{Application to ${\rm U}(1)$ gauge theory in 2D}\label{sec:u1_2d_application}
We use the angular representation of ${\rm U}(1)$ gauge theory in two spacetime dimensions and define the ${\rm U}(1)$ lattice action as 
\begin{equation}\label{eq:u1_lattice_action}
    S[\theta] = -\beta \sum_x \sum_{\mu < \nu} \cos \phi_{\mu\nu}(x),
\end{equation}
where the angular plaquettes depend on the gauge link phases $\theta_\mu(x)$ via
\begin{equation}
    \phi_{\mu\nu}(x) = \theta_\mu(x) + \theta_\nu(x + \hat{\mu}) - \theta_\mu(x + \hat{\nu}) - \theta_\nu(x),
\end{equation}
and $\phi_{\mu\nu}$ form gauge-invariant objects on the lattice.
For further details on the formulation of ${\rm U}(1)$ gauge theory on the lattice, we direct the reader to appendix~\ref{apx:formulation_u1_gauge_theory}.

\subsection{Gauge-equivariant score network}
The force for the ${\rm U}(1)$ gauge theory on the lattice is given by
\begin{equation}\label{eq:u1_score_function}
    \boldsymbol{s}(\theta_\mu(x)) = \beta \sum_\nu \left[\sin\phi_{\mu\nu}(x) - \sin \phi_{\mu\nu}(x - \hat{\nu})\right],
\end{equation}
which corresponds to the true score function with respect to the real-valued $\theta_\mu(x)$. For a derivation of equation~\eqref{eq:u1_score_function}, see appendix~\ref{apx:u1_score_func}. The force in equation~\eqref{eq:u1_score_function} is gauge-invariant, as it only depends on plaquette phases $\phi_{\mu\nu}$. We therefore aim to construct a score network that respects gauge symmetry.

To build an explicitly gauge-invariant score network, we make our network a function of gauge-invariant quantities. On their own, the gauge links angles $\theta_0(x), \theta_1(x)$ are gauge-variant, but can be composed into plaquette phases $\phi_{01}(x)$, thereby giving a set of gauge-invariant data. Then, any transformation that the score network applies will remain agnostic to gauge transformations on the links. In contrast to other approaches like data augmentation where one applies random gauge transformations to the input training data, our approach exactly enforces ${\rm U}(1)$ gauge symmetry, as was done in ref.~\cite{ranner2024sampling}.

To parameterize our score network, we again use a U-Net with circularly-padded convolutional layers to respect the lattice's periodic boundaries, as described in section~\ref{sec:phi4_network_architecture}. The gauge-invariant U-Net is constructed such that it takes in data with a single channel (the plaquette phases), and returns data with two channels ($N_d = 2$ spacetime directions) to represent a score function with respect to gauge links.

We train on 65,536 ${\rm U}(1)$ configurations generated with 1000 HMC trajectories (following 200 thermalization steps) of step size 0.1 and 10 leapfrog steps per trajectory. All models were trained for 500 epochs with learning rate $5 \times 10^{-4}$ and batch size 512. The testing dataset to which we compare consists of 8192 configurations generated via the same method.

\begin{figure}
    \centering
    \includegraphics[width=1.00\linewidth]{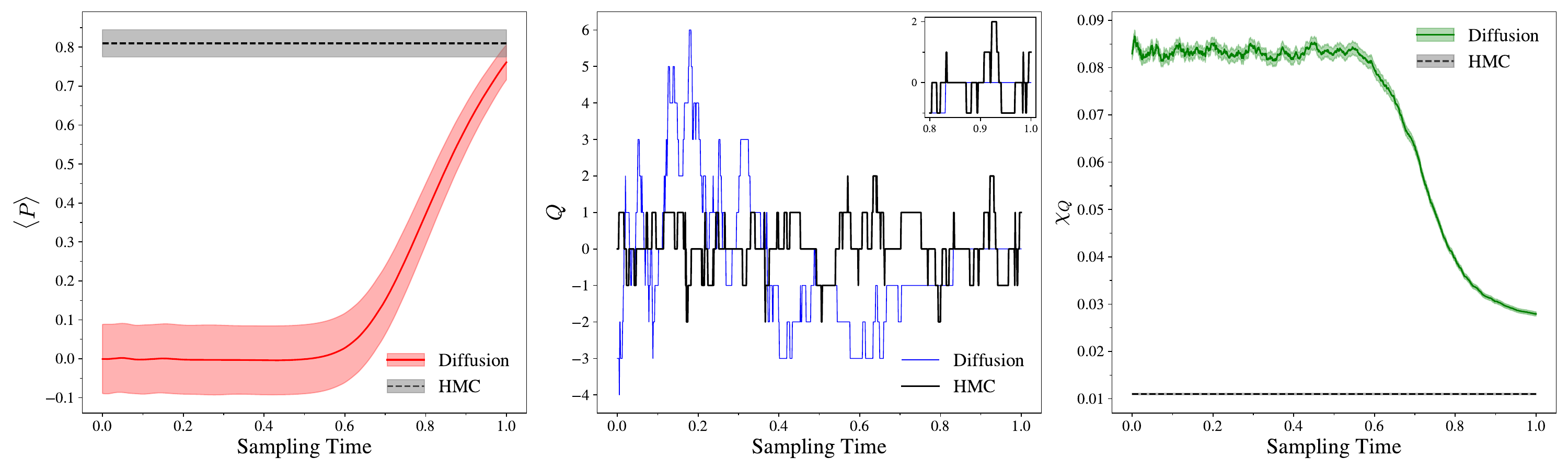}
    \caption{Progression over sampling time of (left) the average plaquette $\langle P \rangle$ with 1$\sigma$ bands and (right) the topological susceptibility $\chi_Q$ with standard error on $8 \times 8$ lattices at $\beta=3.0$. As expected, the values of $\langle P \rangle$ and $\chi_Q$ from the diffusion model are seen to gradually approach those of the HMC test ensemble. The remaining gap at the end of sampling can be corrected after by reweighting (see table~\ref{tab:observables_comparison}). (Middle) topological charge $Q$ of a randomly chosen configuration each from the diffusion-generated and HMC ensembles.}
    \label{fig:u1_reverse_sampling_dynamics}
\end{figure}

After training, we visualize the inference process by tracking certain quantities over reverse diffusion time. An important quantity for tracking the exploration of phase space is the \emph{topological charge}~\cite{Luscher1982Topology}. In two dimensions, the U(1) theory topological charge is defined by
\begin{equation}\label{eq:topo_charge}
    Q := \frac{1}{2\pi} \sum_x \phi_{01}(x) \in \mathbb{Z}.
\end{equation}
In figure~\ref{fig:u1_reverse_sampling_dynamics} we plot the evolution of the average plaquette, topological charge, and topological susceptibility during the reverse process. While we see variations of the topological charge over a broad range during the diffusion-based sampling, the effect of autocorrelations of the topological charge on the observables of interest is yet to be explored. The inset figure displays how, near the end of sampling time, the topological charge of the diffusion-generated configuration ceases to fluctuate. Previous work on diffusion models for image generation has demonstrated that high-level attributes\footnote{We direct the reader to the discussion provided in Appendix D of Ref.~\cite{Ho2020DDPM}.} are decided earlier in the reverse trajectory, and samples beyond this point conditioned on the same latents share these global features~\cite{Ho2020DDPM}. In analogy, the topological charge is also a global feature of U(1) configurations, possibly explaining why it stops fluctuating earlier on.

As was done for the magnetization in the $\phi^4$ theory in section~\ref{sec:phi4_comparison_with_hmc}, we compute the autocorrelation functions for the topological charge $Q$ to compare the diffusion-generated data with an HMC ensemble at couplings $\beta \in \{3.0, 4.0, 5.0\}$. The HMC dataset is generated using 100 independent thermalization trajectories and an additional 10,000 HMC steps, where every 10 samples are saved for computing the autocorrelation. Each HMC tracjectory consisted of 10 leapfrog steps at a step size of 0.1. We plot the normalized autocorrelation functions for $Q$ on two different lattice sizes in fig.~\ref{fig:normalized_autocorr_Q}.
\begin{figure}
    \centering
    \includegraphics[width=1.0\linewidth]{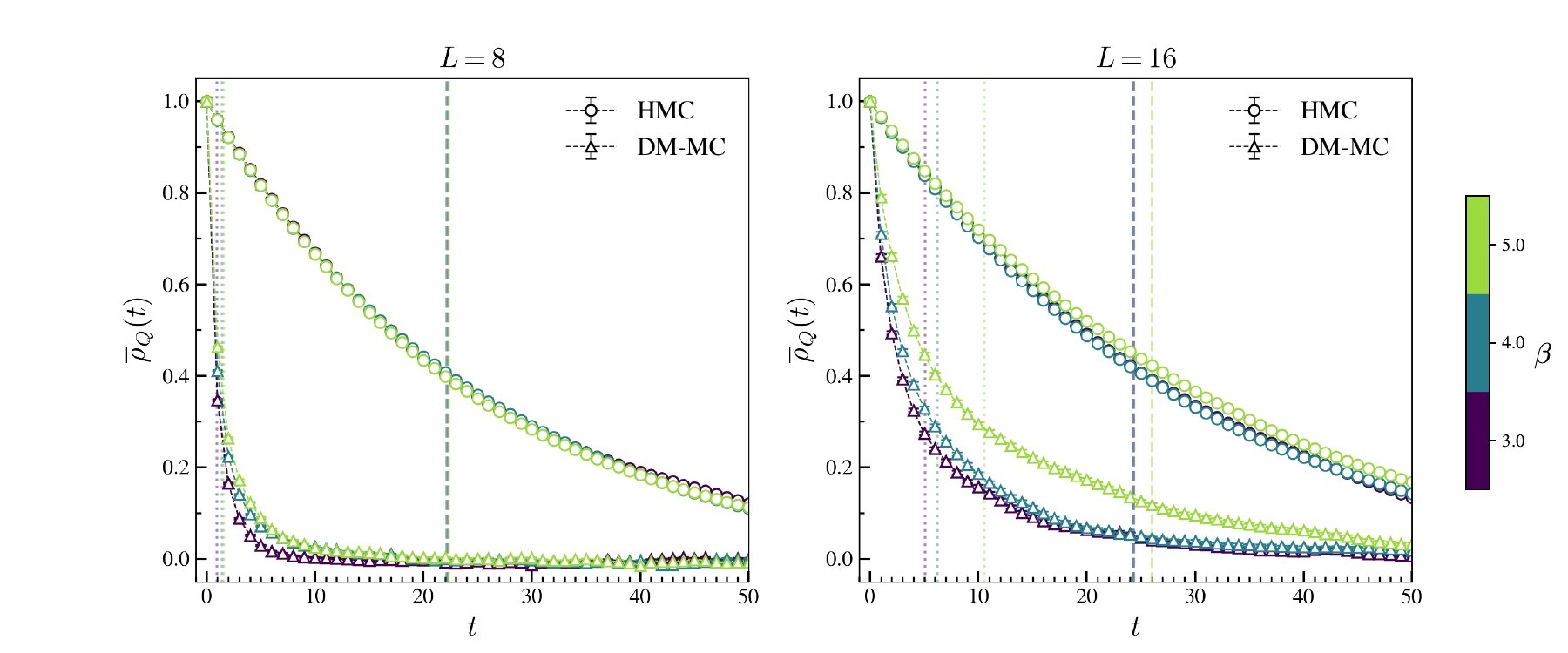}
    \caption{Normalized autocorrelation functions for the topological charge $Q$ at $\beta \in \{3.0, 4.0, 5.0\}$ on (left) $8 \times 8$ and (right) $16 \times 16$ lattices computed for both diffusion model and HMC ensembles. The integerated autocorrelation times for each ensemble are displayed as vertical lines (dotted for diffusion, dashed for HMC).}
    \label{fig:normalized_autocorr_Q}
\end{figure}
For the integrated autocorrelation times (in order of increasing $\beta$), at $L=8$ we find $\tau_{\rm int}^{\rm DM}(Q) = \{0.9, 1.4, 1.5\}$ and $\tau_{\rm int}^{\rm HMC}(Q) = \{22, 22, 22\}$; at $L=16$ we find $\tau_{\rm int}^{\rm DM}(Q) = \{5.1, 6.2, 10\}$ and $\tau_{\rm int}^{\rm HMC}(Q) = \{24, 24, 26\}$.

\subsection{Additional observables}
In the U(1) theory, we compute expectation values of rectangular Wilson loops. In figure~\ref{fig:wilson_loop_comparisons}, we compare Wilson loops between the HMC and diffusion-generated data with the exact results on a $8 \times 8$ lattice at $\beta = 3.0$ and see good agreement. Unbiased results are obtained using the weighting factors $w_i = p_i/q_i$ to compute reweighted means and errors of the observables of interest, as outlined in section~\ref{sec:likelihood_computation}.
\begin{figure}
    \centering
    \includegraphics[width=1.00\linewidth]{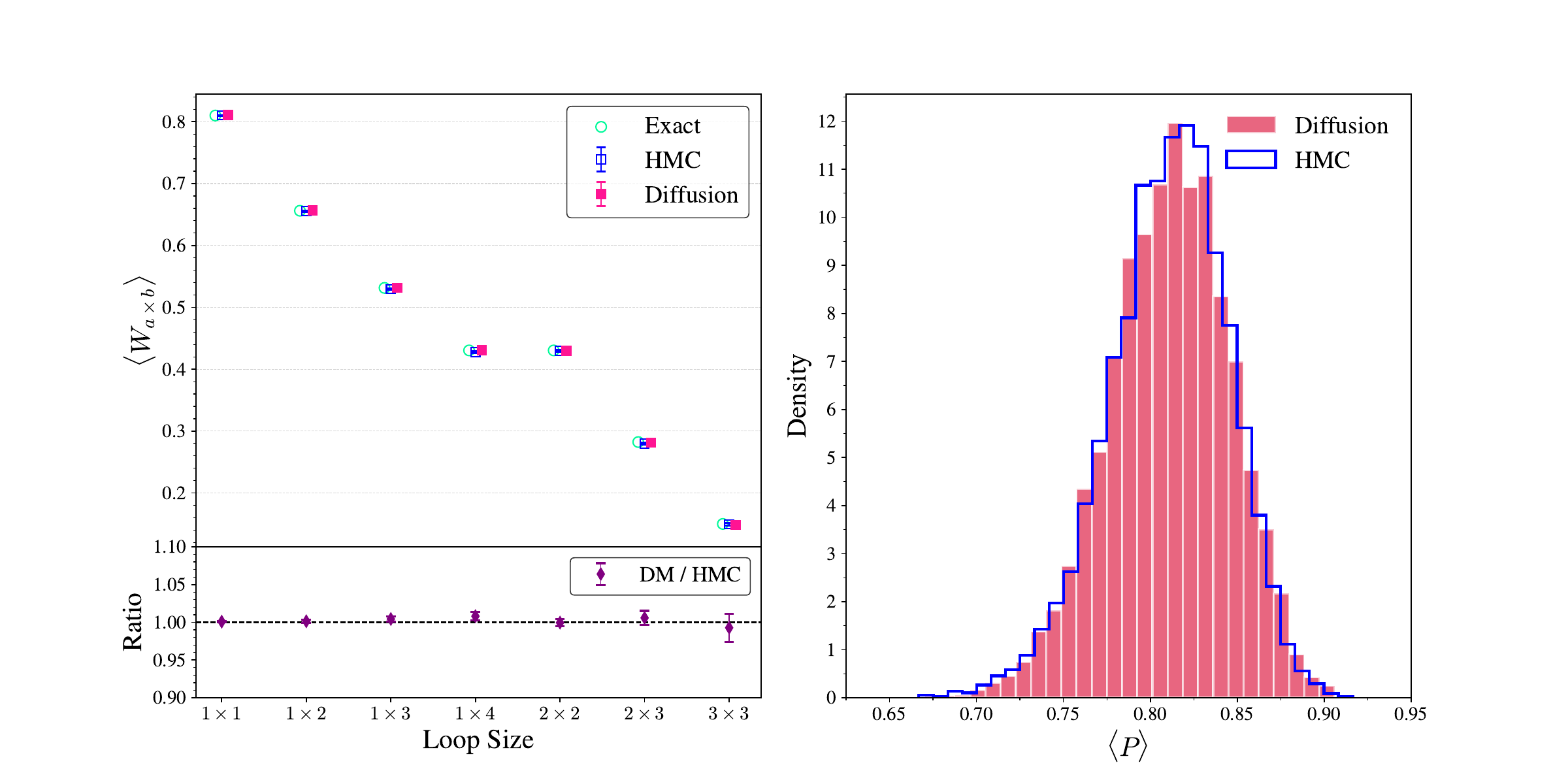}
    \caption{(Left) Measurements of the mean value and standard error of rectangular Wilson loops at different sizes $a \times b$, computed on 8192 configurations each from a trained diffusion model and an HMC test dataset for Abelian gauge theory at $\beta = 3.0$ on an $8 \times 8$ lattice. We also plot the exact values against the numerical estimates. (Right) normalized histograms of reweighted average plaquette values from the diffusion model and HMC test set.}
    \label{fig:wilson_loop_comparisons}
\end{figure}
We also compute \emph{topological susceptibility}, formally defined as
\begin{equation}
    \chi_Q := \frac{\langle Q^2\rangle}{V}.
\end{equation}
The topological susceptibility is associated with the fluctuations of the topological charge, and while its computation in lattice QCD poses many challenges~\cite{DelDebbio:2004ns,LuscherPalombi2010TopSus,Ce2015NonGaussian,Ce2016Topological,Athenodorou:2022TopSus}, the simple angular definition of $Q$ in equation~\eqref{eq:topo_charge} allows a straightforward computation of $\chi_Q$ in U(1) lattice gauge theory. 
Table~\ref{tab:observables_comparison} lists reweighted measurements of $\langle P\rangle$ and  $ \chi_Q$, which show good agreement between the data obtained with HMC and the diffusion model for smaller values of $\beta$. We observe some tension for the mean-value of $\langle P\rangle$ for larger values of $\beta$. 
This might be due to error in estimating the Jacobian of the induced transformation, though we leave this investigation to future work.

\sisetup{
  uncertainty-mode = compact,
  separate-uncertainty = false,
  detect-weight = true,
  detect-inline-weight = math,
}

\newcommand{\NA}{\multicolumn{1}{c}{--}}

\begin{table}
\centering
\setlength{\tabcolsep}{8pt}
\renewcommand{\arraystretch}{1.1}

\begin{tabular}{
    c l
    S[table-format=1.3(2)] S[table-format=1.4(2)] c
    S[table-format=1.3(2)] S[table-format=1.4(2)] c
}
    \specialrule{.12em}{0pt}{0pt}

    & &\multicolumn{3}{c}{$L = 8$} & \multicolumn{3}{c}{$L = 16$} \\

    \cmidrule[0.1pt](lr){3-5}
    \cmidrule[0.1pt](lr){6-8}

    {$\beta$} & {Ensemble}
    & {$\langle P \rangle$} & {$\chi_Q$} & {ESS}
    & {$\langle P \rangle$} & {$\chi_Q$} & {ESS} \\

    \specialrule{.05em}{0pt}{0pt}

    \multirow{2}{*}{1.0}
        & HMC
        & 0.4460(8) & 0.0406(6) & --
        & 0.4463(4) & 0.0420(7) & -- \\
        & Diffusion
        & 0.4450(8) & 0.0406(6) & 93\%
        & 0.4463(4) & 0.0409(7) & 84\% \\

    \cmidrule(lr){2-8}

    \multirow{2}{*}{2.0}
        & HMC
        & 0.6971(5) & 0.0197(3) & --
        & 0.6977(1) & 0.0193(1) & -- \\
        & Diffusion
        & 0.6967(7) & 0.0194(3) & 81\%
        & 0.6970(4) & 0.0193(4) & 53\% \\

    \cmidrule(lr){2-8}

    \multirow{2}{*}{3.0}
        & HMC
        & 0.8098(4) & 0.0111(1) & --
        & 0.8099(1) & 0.0111(2) & -- \\
        & Diffusion
        & 0.8085(5) & 0.0110(1) & 63\%
        & 0.8098(4) & 0.0111(4) & 23\% \\

    \cmidrule(lr){2-8}

    \multirow{2}{*}{4.0}
        & HMC
        & 0.8632(3) & 0.0075(1) & --
        & 0.8637(1) & 0.0075(1) & -- \\
        & Diffusion
        & 0.8626(4) & 0.0075(2) & 46\%
        & 0.8627(2) & 0.0078(2) & 21\% \\

    \cmidrule(lr){2-8}

    \multirow{2}{*}{5.0}
        & HMC
        & 0.8935(2) & 0.0056(1) & --
        & 0.8932(1) & 0.0057(1) & -- \\
        & Diffusion
        & 0.8923(3) & 0.0057(2) & 41\%
        & 0.8930(3) & 0.0059(3) & 14\% \\

    \specialrule{.12em}{0pt}{0pt}
\end{tabular}
\caption{Comparison of observables for $8 \times 8$ and $16 \times 16$ lattices at couplings $\beta \in \{1, 2, 3, 4, 5\}$ between diffusion-generated and HMC ensembles. The HMC ensembles contain 8192 independent configurations, each generated with 200 thermalization trajectories. The diffusion ensembles also contain 8192 configurations, and their observables are reweighted.}
\label{tab:observables_comparison}
\end{table}
We obtain effective sample sizes in the range of 41\% to 93\% for the $8 \times 8$ ensembles and between 14\% and 84\% for the $16 \times 16$ ensembles in table~\ref{tab:observables_comparison}. As expected, the likelihood computation becomes more challenging with increasing lattice size; the ESS decreases with increasing volume, thereby reducing the efficiency of using the diffusion model to produce precise estimates for observables. In addition, we observe stronger downward scaling of the ESS with increasing $\beta$ in the $L=16$ regime as compared to $L=8$. In principle, this effect can be mitigated by using more expressive score networks or longer training times. As an alternative method, ref.~\cite{Zhu:2025pmw} applies MALA during the sampling phase for ${\rm U}(1)$ gauge theory. We reserve these ongoing investigations for future work.

\section{Conclusions and outlook}\label{sec:conclusion}
In this work, we develop symmetry-preserving diffusion models and subsequently test them on two-dimensional $\phi^4$ field theory and ${\rm U}(1)$ gauge theory. Through several numerical experiments, we demonstrate that our symmetry-preserving diffusion models produce ensembles yielding accurate physical observables. We show that our score networks correctly approximate the force fields and correctly learn to sample from the target densities, which is enhanced by enforcing symmetry. We employ a modified version of score matching that improves the quality of our models. It is noteworthy that score matching is a fast training procedure, which is particularly efficient in comparison to training continuous flow-based models that require the additional step of forward and backward passes through ODE solvers. We validate the proposed diffusion models by comparing several key observables with conventional sampling approaches.

We quantify our model's performance by computing effective sample sizes (ESS) and acceptance rates for resampled ensembles, where we find that our generated ensembles display significantly less autocorrelation after resampling compared to HMC data. As discussed, although this improvement comes at the cost of increased computational expense, it can be trivially parallelized. Most encouragingly, we illustrate that our paradigm, which combines force-regularization in training and encoding explicit symmetries, yields models that achieve stronger performance than non-equivariant baselines. While this is a promising start, to establish scalability, it is still necessary to examine the performance of this framework in higher dimensions and with much larger lattice sizes. 

Diffusion models are naturally amenable to different methods of sampling, several of which are detailed in ref.~\cite{Yang2023DiffusionSurvey}. Langevin dynamics has been applied to solve the reverse SDE~\cite{Dockhorn2022CriticallyDamped, Jolicoeur-Martineau2020AdversarialScoreMatching} with different implementations. Other modifications include SDE solvers with adaptive step sizes~\cite{Jolicoeur-Martineau2021GottaGoFast} and explicit predictor-corrector methods~\cite{Song2021ScoreSDE}. While we do not consider these additional sampling methods in this study, they form interesting and necessary avenues to explore in future work. Examining how different integrators, predictor-correctors, and other methods can impact the quality of generated samples is an important direction for future research on this topic.

Another natural future direction is to extend our framework to non-abelian gauge theories, especially ${\rm SU}(3)$, towards simulating QCD. The primary aim would be to design score networks that exactly respect ${\rm SU}(N)$ symmetries, which can be achieved, for example, through the lattice gauge-equivariant convolutional neural networks (L-CNNs) introduced in ref.~\cite{Favoni:2020reg}. Integrating fermions into these models should be straightforward because pre-generated gauge configurations will already include all information about fermionic determinants. Nevertheless, the score function may need to be more expressive due to the non-locality of the fermion determinant. It would also be interesting to scrutinize the $\phi^4$ theory more deeply in this context. In particular, diffusion models could be used to characterize the phase transitions and criticality of the theory. This also creates a strong opportunity for testing more advanced score network architectures. Additionally, developing a better understanding of how hyperparameters, for both training and inference, influence sample quality would be extremely useful for scaling models upwards. 

To echo the conclusions of ref.~\cite{Abbott:2022zsh}, we emphasize that diffusion models, like normalizing flows, each represent just one approach in a much broader taxonomy of next-generation deep generative models and ML-enhanced sampling techniques for LQFT. Developing a fuller perspective on the future capabilities of the methods proposed here is worth exploring.

\acknowledgments
The authors are thankful to Gert Aarts and Lingxiao Wang for correspondence and helpful comments, and Antoine Misery for his work in the early stages of the project.
OV and AXK acknowledge the ETH Zürich Institute for Theoretical Physics and the Pauli Center for Theoretical Studies for support and hospitality in summers 2024 and 2025, as well as the Mainz Institute for Theoretical Physics (MITP) of the Cluster of Excellence PRISMA$^+$ (Project ID 390831469) for its hospitality and support.
AXK and MKM thank the Kavli Institute for Theoretical Physics (KITP) for hospitality and support during the program “What is Particle Theory?” The KITP is supported in part by the National Science Foundation under Grant No. PHY-2309135. MKM is grateful for the hospitality of Perimeter Institute where part of this work was carried out. Research at Perimeter Institute is supported in part by the Government of Canada through the Department of
Innovation, Science and Economic Development and by the Province of Ontario through the Ministry of Colleges and Universities. This research was also supported in part by the Simons Foundation through the Simons Foundation Emmy Noether Fellows Program at Perimeter Institute. AXK and OV acknowledge support from the U.S. Department  of Energy, Office of Science under grant Contract Number DE-SC0015655. JK and MKM acknowledge the support received from the Horizon Europe project interTwin, funded by the European Union Grant Agreement Number 101058386. OV acknowledges support from a University of Illinois Graduate College Fellowship, as well as from a Sloan Scholarship through the Alfred P. Sloan Foundation’s University Center of Exemplary Mentoring, awarded in 2023-2024. Numerical experiments and data analysis were carried out with PyTorch~\cite{paszke2019pytorch}, NumPy~\cite{Harris2020ArrayProgramming}, and SciPy~\cite{Virtanen2020SciPy1}. Figures were produced using Matplotlib~\cite{Hunter2007Matplotlib}. Almost all computational work in this project was performed on a single NVIDIA TITAN Xp GPU through the ETH D-PHYS \texttt{spaceml4} compute node.
Some of the experiments in this research used the DeltaAI advanced computing and data resource, which is supported by the National Science Foundation (award OAC 2320345) and the State of Illinois. DeltaAI is a joint effort of the University of Illinois Urbana-Champaign and its National Center for Supercomputing Applications.
Access to DeltaAI was provided in part by the Illinois Computes project which is supported by the University of Illinois Urbana-Champaign.

\appendix

\section{Lattice field theory formalism}\label{apx:lqft_formalism}
Here we present further details on the formulation of relevant aspects of LQFT which supplement the intuitions presented in the main body of our work.

\subsection{Pure abelian gauge theory on the lattice}\label{apx:formulation_u1_gauge_theory}
A ${\rm U}(1)$ gauge theory in continuous Minkowski spacetime can be defined by the Maxwell Lagrangian density:
\begin{equation}
    \mathscr{L} = -\frac{1}{4}F_{\mu\nu}F^{\mu\nu},
\end{equation}
where the components of the abelian field strength tensor $F_{\mu\nu}$ are given by
\begin{equation}
    F_{\mu\nu} = \partial_\mu A_\nu - \partial_\nu A_\mu.
\end{equation}
This Lagrangian is invariant under local ${\rm U}(1)$ transformations, represented as translations of the gauge field by a gradient:
\begin{equation}\label{eq:continuum_u1_gauge_transform}
    A_\mu(x) \mapsto A_\mu(x) - \partial_\mu \alpha(x)
\end{equation}
where $\alpha$ is an arbitrary scalar-valued function. In the defining, one-dimensional representation of the Lie group ${\rm U}(1)$, the gauge degrees of freedom $A_\mu \in \mathfrak{u}(1)$ are algebra-valued spacetime vector fields which play the role of affine connections for the gauge group. When discretizing the pure U(1) theory onto the lattice, one replaces the algebra-valued gauge fields $A_\mu$ with group-valued parallel transporters $U_\mu$, called \textit{gauge links}, defined as path-ordered Wilson lines between infinitesimally separated adjacent lattice sites:
\begin{equation}
	U_\mu(x) := \mathcal{P} \exp\left\{i \int_{x}^{x + \hat{\mu}} A_\nu(y) \; dy^\nu \right\}.
\end{equation}
For sufficiently smoothly varying gauge fields, the limit of small lattice spacing allows us to replace $A_\mu$ by an approximately constant phase parameter $\theta_\mu \in [-\pi, \pi)$. Then the gauge links can simply be written as complex phases $U \in \mathbb{C}$:
\begin{equation}\label{eq:gauge_link_phase_angle}
    U_\mu(x) = e^{i \theta_\mu(x)}.
\end{equation}
Under arbitrary gauge transformations $\Omega(x) \in {\rm U}(1)$, the links transform via
\begin{equation}
    U_\mu(x) \mapsto \Omega(x) U_\mu(x) \Omega^*(x + \hat{\mu}).
\end{equation}
Gauge-invariant quantities on the lattice are therefore constructed as products of link variables around closed loops, since any closed loop that starts and ends at a lattice site $x$ will transform via conjugation by $\Omega(x)$. Since ${\rm U}(1)$ is abelian, gauge-invariance is trivial as the conjugation leaves the loop unchanged. As such, the gauge-invariant ${\rm U}(1)$ lattice action is written as
\begin{equation}
    S[U] = -\beta \sum_x \sum_{\mu < \nu} \Re P_{\mu\nu}(x),
\end{equation}
where $\beta$ corresponds to an inverse coupling strength and $P_{\mu\nu}$ are the smallest gauge-invariant units that can be constructed on the lattice, called \textit{plaquettes}, defined as
\begin{equation}
    P_{\mu\nu}(x) := U_\mu(x) U_\nu(x + \hat{\mu}) U_\mu^*(x + \hat{\nu}) U_\nu^*(x).
\end{equation}
Since both ${\rm U}(1)$ gauge links and gauge transformations are elements of a one-parameter compact Lie group, we may adopt the one-dimensional \textit{angular representation} in which we parameterize the transformations as complex phases as well:
\begin{equation}\label{eq:gauge_transf_phase_angle}
	\Omega(x) = e^{i \omega(x)}
\end{equation}
such that the gauge transformation of link variables can be fully specified in the Lie algebra as
\begin{equation}\label{eq:angular_gauge_transform}
    \theta_\mu(x) \mapsto \theta_\mu(x) + \omega(x) - \omega(x + \hat{\mu}),
\end{equation}
which can be seen as a discretized version of the continuous Abelian gauge field transformation~\eqref{eq:continuum_u1_gauge_transform}. Furthermore, plaquettes can be written more simply in the angular representation as
\begin{equation}\label{eq:plaquette_angle}
    \phi_{\mu\nu}(x) := \theta_\mu(x) + \theta_\nu(x + \hat{\mu}) - \theta_\mu(x + \hat{\nu}) - \theta_\nu(x),
\end{equation}
where we make the identification that $P_{\mu\nu} = e^{i \phi_{\mu\nu}}$. Furthermore, one can easily show that $\phi_{\mu\nu}(x)$ is invariant under transformations of the form \eqref{eq:angular_gauge_transform}, so gauge-invariance is preserved in the angular representation. Having parameterized both gauge links $\theta$ and gauge transformations $\Omega$ as real-valued phases in the compact algebra $\omega \in [-\pi, \pi] \cong \mathfrak{u}(1)$, we may write down the ${\rm U}(1)$ lattice action:
\begin{equation}\label{eq:u1_gauge_action}
    S[\theta] = -\beta \sum_{x } \sum_{\mu < \nu} \cos \phi_{\mu\nu}(x),
\end{equation}
where one immediately sees that this action must be gauge-invariant since it is written entirely in terms of gauge-invariant plaquette phases.

\subsection{Calculation of the ${\rm U(1)}$ gauge force}\label{apx:u1_score_func}
In this section we present an analytical derivation of the exact ${\rm U(1)}$ score function, i.e. the force. For further reading and comparison of results, we direct readers to section 8.2.3 of ref.~\cite{Gattringer:2010zz}, where the ${\rm SU}(3)$ force is similarly computed for use in the HMC algorithm. 

In the numerical experiments presented in section~\ref{sec:u1_2d_application}, the relevant data are the link angles $\theta_\mu(x)$ corresponding to the gauge field configurations. To compute the force, we must differentiate eq.~\eqref{eq:u1_gauge_action} with respect to a fixed angle configuration $\theta_\alpha(y)$ for some arbitrary direction $\alpha$ and some lattice site $y$. The chain rule gives
\begin{equation}
    \boldsymbol{s}(\theta_\alpha(y)) = -\frac{\partial}{\partial \theta_\alpha(y)} S[\theta] = -\beta\sum_x \sum_{\mu < \nu} \sin\phi_{\mu\nu}(x) \frac{\partial \phi_{\mu\nu}(x)}{\partial \theta_\alpha (y)}.
\end{equation}
Then, differentiating the plaquette angle~\eqref{eq:plaquette_angle} yields
\begin{align}
	\frac{\partial \phi_{\mu\nu}(x)}{\partial \theta_\alpha (y)} &= \delta_{x, y} \delta_{\mu\alpha} + \delta_{x + \hat{\mu}, y}\delta_{\nu\alpha} - \delta_{x +\hat{\nu}, y}\delta_{\mu\alpha} - \delta_{x, y}\delta_{\nu\alpha} \\
	&= \delta_{\mu\alpha} \left(\delta_{x, y} - \delta_{x +\hat{\nu}, y}\right) + \delta_{\nu\alpha}\left(\delta_{x + \hat{\mu}, y} - \delta_{x, y} \right).
\end{align}
The plaquette angle satisfies two useful properties for all lattice sites $x$. Firstly, if the directions are the same, the angle vanishes: $\phi_{\mu\mu}(x) = 0$. Secondly, under the transposition $\mu \leftrightarrow \nu$, the plaquette is negated: $\phi_{\mu\nu}(x) = -\phi_{\nu\mu}(x)$.
Using these facts, the double sum over spacetime directions can be replaced by $\sum_{\mu < \nu} \rightarrow \tfrac{1}{2}\sum_{\mu, \nu}$.
We now have
\begin{align}
	\boldsymbol{s}(\theta_\alpha(y)) &= -\frac{\beta}{2}\sum_x \sum_\nu\left(\delta_{x, y} - \delta_{x +\hat{\nu}, y}\right)\sin\phi_{\alpha\nu}(x) \nonumber \\ 
	&\;\;\;\;\; - \frac{\beta}{2}\sum_x \sum_\mu \left(\delta_{x + \hat{\mu}, y} - \delta_{x, y} \right) \sin\phi_{\mu\alpha}(x),
\end{align}
where we used the Kronecker deltas to complete the inner sums over $\mu$ and $\nu$ in the first and second terms, respectively. Summing over the lattice sites gives 
\begin{align}
	\boldsymbol{s}(\theta_\alpha(y)) &= -\frac{\beta}{2} \sum_\nu \left(\sin\phi_{\alpha\nu}(y) - \sin\phi_{\alpha\nu}(y - \hat{\nu})\right) \nonumber \\ 
	&\;\;\;\;\; - \frac{\beta}{2} \sum_\mu \left(\sin\phi_{\mu\alpha}(y - \hat{\mu}) - \sin\phi_{\mu\alpha}(y) \right).
\end{align}
Relabeling the summed indices $\mu$ and $\nu$ to a common index $\gamma$ and using the antisymmetry of the angular plaquette gives the final result:
\begin{equation}
	\boldsymbol{s}(\theta_\alpha(y)) = -\beta \sum_\gamma \left[\sin\phi_{\gamma\alpha}(y) - \sin\phi_{\gamma\alpha}(y - \hat{\gamma})\right].
\end{equation}
Upon inspection, it becomes clear that $\boldsymbol{s}(\theta)$ is ${\rm U}(1)$-invariant in the angular representation, since it is built entirely out of gauge-invariant plaquette phases $\phi_{\mu\nu}$. 

\section{Stochastic dynamics of scalar fields}\label{apx:fp_dynamics}
\subsection{Derivation of the Fokker-Planck equation}\label{apx:fp_equation}
In this section, we derive the Fokker-Planck equation (FPE) for scalar fields following ref.~\cite{Zinn-Justin:1989rgp}.
This equation describes the time evolution of the probability density function (PDF) $p_t(\phi_t)$. We denote a scalar field by $\phi_t(x)$, where $x$ corresponds to a lattice site and $t \geq 0$ indexes the fictitious diffusion time. For brevity, we will suppress the position dependence and use the shorthand $\phi_t$. The forward diffusion process can be expressed as the SDE:
\begin{equation}
   \frac{d \phi_t}{dt} = f(\phi_t) + \sigma \eta_t,
   \label{eq:diffusion:SDE:phi_t}
\end{equation}
where $f(\phi_t)$ represents a deterministic drift term, $\eta_t$ is a stochastic term that we set to the standard white Gaussian noise,
and $\sigma$ is constant with respect to the scalar field, but it can depend on time. For a small step size $h$ in time, we have
\begin{equation}
    \phi_{t + h} = \phi_t + h f(\phi_t) + \sqrt{h} \sigma \int_t^{t+h} \frac{d\tau}{\sqrt{h}} \; \eta_\tau,
\end{equation}
where we neglected all terms of size $O(h^{3/2})$. The integral yields another noise term with zero mean and unit variance.
First, we calculate the conditional PDF of $\phi_{t + h} = \varphi$
at time $t + h$ subject to $\phi_t = \varphi_0$ at time $t$:
\begin{align}
    p_{t + h|t}(\varphi | \varphi_0)
    &= \mathbb{E}_{\eta}\left[
    \delta\left(\varphi - \varphi_0 - h f(\varphi_0) - \sqrt{h} \sigma \eta\right)
    \right] \nonumber
    \\
    &= \int \frac{d\zeta}{(2\pi)^n} \; e^{i \zeta \cdot (\varphi - \varphi_0 - h f(\varphi_0))}
    \mathbb{E}_{\eta} \left[e^{-i \zeta \cdot \sqrt{h} \sigma \eta}\right]
    \nonumber
    \\
    &= \int \frac{d\zeta}{(2\pi)^n} \; e^{i \zeta \cdot (\varphi - \varphi_0 - h f(\varphi_0))}
    e^{-\frac{h\sigma^2}{2} \zeta\cdot \zeta}\, .
    \label{eq:conitional:phi_t}
\end{align}
Here, we treat $\varphi$ as a vector of dimension $n$, and the delta function is understood as a product of $n$ delta functions over the components of the vector. In the second line, we used the Fourier representation of the delta function by introducing $\zeta$ as a vector of the same size of $\varphi$. Then, $p_{t+h}(\varphi)$ can be expressed as
\begin{align}
   p_{t+h}(\varphi) &=\int d \varphi_0\, p_{t + h|t}(\varphi | \varphi_0) p_t(\varphi_0) \nonumber\\
   &=\int d \varphi_0 \int \frac{d\zeta}{(2\pi)^n} \; e^{i \zeta \cdot (\varphi - \varphi_0 - h f(\varphi_0))}
    e^{-\frac{h\sigma^2}{2} \zeta\cdot \zeta}\,
    p_t(\varphi_0)\,.
\end{align}
The time derivative of $p_{t}(\phi_t)$ then reads
\begin{align}
    \frac{\partial}{\partial t} p_{t}(\varphi) 
    &= \lim_{h\to 0} \frac{p_{t + h}(\varphi) - p_{t}(\varphi)}{h}
    \nonumber
    \\
    &= \int d \varphi_0 \; \lim_{h\to 0}
    \frac{p_{t + h|t}(\varphi | \varphi_0)
    - p_{t|t}(\varphi | \varphi_0)}{h} 
    p_t(\varphi_0)
    \nonumber
    \\
    &= \int d \varphi_0 \; \int \frac{d\zeta}{(2\pi)^n} \; e^{i \zeta \cdot \varphi} e^{-i \zeta \cdot \varphi_0}
    \left(- i \zeta \cdot f(\varphi_0) - \frac{\sigma^2}{2} \zeta\cdot \zeta\right) 
    p_t(\varphi_0)
    \nonumber
    \\
    &= \int d \varphi_0 \; \delta(\varphi - \varphi_0)
    \left(- \nabla \cdot f(\varphi_0)
    + \frac{\sigma^2}{2} \nabla \cdot \nabla \right) 
    p_t(\varphi_0)
    \nonumber
    \\
    &= \nabla \cdot \left(- f(\varphi)
    + \frac{\sigma^2}{2} \nabla \right)
    p_t(\varphi)
    \label{eq:FP:phi_t}.
\end{align}
In deriving the fourth line, we applied integration by parts and assumed that the integrand vanishes at the boundaries. This concludes our derivation of the Fokker-Planck equation.

\subsection{Reverse Langevin process}\label{apx:reverse_process_derivation}
We now discuss the reverse process in Langevin dynamics. To this end, we introduce a function $g$ and a SDE as
\begin{align}
   \frac{d \phi_t}{dt} &= g(\phi_t) + \tilde \sigma \eta_t
   \label{eq:diffusion:SDE:rev:phi_t}\,,
\end{align}
where $\eta_t$ is the standard white Gaussian noise
and $\tilde \sigma$ is constant with respect to the scalar field, but it can depend on time.
We demand equation~\eqref{eq:diffusion:SDE:rev:phi_t} to be the reverse of
equation~\eqref{eq:diffusion:SDE:phi_t} in the following sense: a sequential evolution of the random field $\phi_{t_1}$ governed by equation~\eqref{eq:diffusion:SDE:phi_t} from time $t_1$ to $t_2$ and by equation~\eqref{eq:diffusion:SDE:rev:phi_t} from $t_2$ to $t_1$ must not change
the PDF of the random field.
In other words, the FPEs of the forward and reverse processes must be
the same. We show below that this condition holds if we set
\begin{equation}
    g(\phi_t) = f(\phi_t) - \frac{\sigma^2 + \tilde \sigma^2}{2} \nabla \log p_t(\phi_t).
    \label{eq:set-reverse-g}
\end{equation}
We now present the proof. We use $p_t(\phi_t)$ and $q_t(\phi_t)$ to denote the PDF of $\phi_t$ in the forward and reverse processes, respectively. The aim is to obtain the function $g$ such that $q_t(\phi_t) = p_t(\phi_t)$. Having already derived the FPE for $p_t(\phi_t)$, we now derive the corresponding equation for $q_t(\phi_t)$. For a small $h$, equation~\eqref{eq:diffusion:SDE:rev:phi_t} can be solved backward in time as
\begin{equation}
    \phi_{t - h} = \phi_t - h g(\phi_t) + \sqrt{h} \tilde \sigma \int_t^{t-h} \frac{d\tau}{\sqrt{h}} \; \eta_\tau ,
\end{equation}
where we neglected all terms of size $O(h^{3/2})$. Similar to the forward process, the integral yields another normal noise with zero mean and unit variance independent of the overall sign. The conditional PDF of $\phi_{t - h} = \varphi$ at time $t - h$ subject to $\phi_t = \varphi_0$ at time $t$ reads
\begin{align}
    q_{t - h |t}(\varphi | \varphi_0)
    &= \mathbb{E}_{\eta}\left[
    \delta\left(\varphi - \varphi_0 + h g(\varphi_0) + \sqrt{h} \tilde \sigma \eta\right)
    \right] \nonumber
    \\
    &= \int \frac{d\zeta}{(2\pi)^n} \; e^{i \zeta \cdot (\varphi - \varphi_0 + h g(\varphi_0))}
    e^{-\frac{h\tilde \sigma^2}{2} \zeta\cdot \zeta}\, .
    \label{eq:conitional:rev:phi_t}
\end{align}
Note that there is a mismatch between the sign of the terms with $h$ in
equations~\eqref{eq:conitional:phi_t} and \eqref{eq:conitional:rev:phi_t}.
Because of the mismatch, the FPE of the reverse process looks different:
\begin{align}
    \frac{\partial}{\partial t} q_t(\varphi) 
    &= \lim_{h\to 0} \frac{q_{t - h}(\varphi) - q_t(\varphi)}{-h}
    \nonumber
    \\
    &= \nabla \cdot \left(- g(\varphi) - \frac{\tilde \sigma^2}{2}
    \nabla\right)
    q_t(\varphi)
    \label{eq:FP:rev:phi_t}.
\end{align}
Defining $g$ as in equation~\eqref{eq:set-reverse-g} removes the apparent difference and makes both FPEs identical (we assume there is a fixed time at which $p$ and $q$ are equal).

It is noteworthy that eq.~\eqref{eq:FP:rev:phi_t} defines a \emph{family} of reverse processes for the diffusion process, parameterized by $\tilde \sigma$. Remarkably, when $\tilde \sigma$ vanishes, the reverse process reduces to a deterministic differential equation.

\subsection{Solving the FPE along ODE trajectories}\label{apx:fp_ode_solution}
In this section, we provide a solution to the FPE and relate it to the probability flow ODE for samples in reverse diffusion time. We suppose that samples $\phi_t$ evolve in reverse time according to the deterministic process where $\tilde\sigma = 0$ in Eq.~\eqref{eq:diffusion:SDE:rev:phi_t}:
\begin{equation}
    \frac{d\phi_t}{dt} = f(\phi_t, t) - \frac{\sigma^2}{2} \nabla \log p_t(\phi_t).
\end{equation}
The total time derivative of the log density is then, by the chain rule:
\begin{equation}\label{eq:time_deriv_logp}
    \frac{d}{dt} \log p_t(\phi_t) = \frac{\partial}{\partial t}\log p_t(\phi_t) + \frac{d\phi_t}{dt} \cdot \nabla \log p_t(\phi_t).
\end{equation}
The FPE for $p_t$ is
\begin{equation}
    \partial_t p_t = -\nabla (fp_t) + \frac{\sigma^2}{2} \nabla^2 p_t,
\end{equation}
which yields
\begin{equation}
    \frac{\partial}{\partial t}\log p_t = \frac{1}{p_t}\left(-p_t\nabla \cdot f - f \cdot \nabla p_t + \frac{\sigma^2}{2}\nabla^2 p_t\right).
\end{equation}
Substituting into \eqref{eq:time_deriv_logp}, we get
\begin{align}
    \frac{d}{dt} \log p_t = -\nabla \cdot f -f \cdot \nabla \log p_t + \frac{\sigma^2}{2}\frac{\nabla^2 p_t}{p_t} \nonumber  + \frac{d\phi_t}{dt} \cdot \nabla \log p_t.
\end{align}
Substituting the deterministic flow for $\phi_t$ and replacing $\nabla \log p_t(\cdot)$ with $\boldsymbol{s}(\cdot, t)$, we have
\begin{align}
    \frac{d}{dt} \log p_t(\phi_t) &= -\nabla \cdot f -f \cdot \boldsymbol{s} + \frac{\sigma^2}{2} \left(\|\boldsymbol{s}\|^2 + \nabla \cdot \boldsymbol{s}\right) + \left(f - \frac{\sigma^2}{2} \boldsymbol{s} \right)\cdot \boldsymbol{s} \\
    &= -\nabla \cdot f(\phi_t, t) + \frac{\sigma^2}{2} \nabla \cdot \boldsymbol{s}(\phi_t, t).
\end{align}
Hence, the total derivative along the probability flow ODE trajectory is simply
\begin{equation}
    \frac{d}{dt} \log p_t(\phi_t) = -\nabla \cdot \left(f(\phi_t, t) - \frac{\sigma^2}{2}\boldsymbol{s}(\phi_t, t)\right),
\end{equation}
whose solution is obtained as
\begin{equation}
    p_t(\phi_t) = \exp\left[-\int_{t'}^t \nabla \cdot \left(f(\phi_\tau, \tau) - \frac{\sigma^2}{2}\boldsymbol{s}(\phi_\tau, \tau)\right) d\tau\right] p_{t'}(\phi_{t'}).
\end{equation}
The probability flow ODE gives a deterministic trajectory for samples $\phi_t$ that pushes probability mass, generating a flow of particles that matches the evolution of the density $p_t$. The pushforward of the starting distribution, $p_0$, along these trajectories yields the same marginal densities as the solution to the FPE.

\section{Equivariance of score functions}\label{apx:group_symmetries}
\subsection{Group symmetries and score functions}\label{apx:score_func_symmetries}
Here, we examine the relationship between the symmetries of probability distributions and the resulting symmetries of their corresponding score functions. Some notation: We denote groups by $\mathcal{G}$ and their algebras, when they exist, by $\mathfrak{g}$. Group elements are denoted $G \in \mathcal{G}$, and their representations by $\rho_G$. We use $\mathcal{F}$ to abstractly denote field space, i.e. the space where field configurations live.
\begin{definition}[Invariance]
    Let $\mathcal{G}$ be a group and let $f: \mathcal{X} \rightarrow \mathcal{X}$ be a function. Suppose an element $G \in \mathcal{G}$ acts on $x \in \mathcal{X}$ via $x \mapsto \rho_G(x)$, where $\rho_G$ is the representation of $G$. We say that $f$ is \textit{invariant} to $G$, or $G$-invariant, if
    \begin{equation}\label{eq:invariance}
        (f \circ\rho_G)(x) = f(x)
    \end{equation}
    for every $x$.  If eq.~\eqref{eq:invariance} holds for all $G \in \mathcal{G}$, then we say $f$ is invariant to $\mathcal{G}$, or $\mathcal{G}$-invariant. 
\end{definition}
\begin{definition}[Equivariance]
    Let $\mathcal{G}$ be a group and $f: \mathcal{X} \rightarrow \mathcal{Y}$ be a function. We say $f$ is \textit{equivariant} to $G \in \mathcal{G}$, or $G$-equivariant, if
    \begin{equation}\label{eq:equivariance}
        (f \circ \rho_G)(x) = (\rho_G \circ f)(x)
    \end{equation}
    for every $x$, i.e. if $f$ commutes with the group action. If instead $G$ acts on either of or both $\mathcal{X}$ and $\mathcal{Y}$, the definition extends to
    \begin{itemize}
        \item right-right equivariance:
        \begin{equation}\label{eq:right-right_equiv}
            f(x \cdot G) = f(x) \cdot G
        \end{equation}
        \item right-left equivariance:
        \begin{equation}\label{eq:right-left_equiv}
            f(x \cdot G) = G^{-1}\cdot f(x)
        \end{equation}
        \item left-right equivariance:
        \begin{equation}\label{eq:left-right_equiv}
            f(G \cdot x) = f(x) \cdot G^{-1}
        \end{equation}
    \end{itemize}
    If any of \eqref{eq:equivariance}--\eqref{eq:left-right_equiv} hold for all $G \in \mathcal{G}$, then we say $f$ is equivariant to $\mathcal{G}$, or $\mathcal{G}$-equivariant.
\end{definition}

\begin{definition}[Score Function]
    Given a time-dependent probability density $p_t(\phi)$ on $\mathcal{F}$, the \emph{score function} at time $t$ is the gradient of the log-density:
    \begin{equation}
        \boldsymbol{s}(\phi, t) := \nabla_\phi \log p_t(\phi) \in T^*_\phi \mathcal{F},
    \end{equation}
    where $T^*_\phi \mathcal{F}$ denotes the cotangent space at $\phi$.
\end{definition}
For simplicity, the present discussion focuses on the univariate case. We could generalize from a single lattice site to a full lattice by viewing relevant functions as being defined over product sets and product groups that cover the domain.
\begin{theorem}[Transformation Law of the Score Function]\label{thm:score_func_transformation}
    Let $G \in \mathcal{G}$ act smoothly on $\mathcal{F}$ by $\phi \mapsto \phi' = \rho_G(\phi)$. If $p_t$ is $\mathcal{G}$-invariant, then the score function transforms under the group action as
    \begin{equation}
        \boldsymbol{s}'(\phi', t) = \left(D_\phi \rho_G\right)^{-\top}\boldsymbol{s}(\phi, t),
    \end{equation}
    where the Jacobian (differential) $D_\phi \rho_G := \partial \rho_G(\phi) / \partial \phi$ represents the linearization of the group action at $\phi$.
\end{theorem}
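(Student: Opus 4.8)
The plan is to prove Theorem~\ref{thm:score_func_transformation} by differentiating the invariance identity for $p_t$ and applying the chain rule. The first point to pin down is what the ``transformed score'' $\boldsymbol{s}'$ means: since $p_t$ is $\mathcal{G}$-invariant in the sense of the Definition of Invariance, i.e. $p_t\circ\rho_G = p_t$ as functions on $\mathcal{F}$, the density is literally unchanged by the group action, so $\boldsymbol{s}'(\phi',t)$ is simply the score of the (same) density evaluated at the transformed configuration, $\boldsymbol{s}'(\phi',t) = \nabla_{\phi'}\log p_t(\phi')$ with $\phi'=\rho_G(\phi)$. (If one instead tracks densities through the change of variables, the pushforward of $p_t$ differs from $p_t$ only by a Jacobian factor $|\det D\rho_G^{-1}|$, whose logarithm has vanishing gradient whenever $|\det D_\phi\rho_G|$ is locally constant; this holds for every action considered in the paper — the $\mathbb{Z}_2$ reflection, the affine ${\rm U}(1)$ gauge shift, and adjoint conjugation on ${\rm SU}(N)$ are all volume-preserving — so the two notions of invariance coincide in the relevant cases, and I would note this rather than belabor it.)

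The computation is then short. Taking logarithms in $p_t(\rho_G(\phi)) = p_t(\phi)$ and differentiating both sides with respect to $\phi$, the chain rule applied to the left-hand side gives
\begin{equation}
  (D_\phi\rho_G)^{\top}\,\big(\nabla\log p_t\big)\big(\rho_G(\phi)\big) = \nabla\log p_t(\phi),
\end{equation}
i.e. $(D_\phi\rho_G)^{\top}\,\boldsymbol{s}(\phi',t) = \boldsymbol{s}(\phi,t)$. Because $\rho_G$ is smooth and invertible with inverse $\rho_{G^{-1}}$, differentiating $\rho_G\circ\rho_{G^{-1}} = \mathrm{id}$ shows $D_\phi\rho_G$ is an invertible linear map on the relevant tangent space; left-multiplying by $(D_\phi\rho_G)^{-\top}$ yields $\boldsymbol{s}'(\phi',t) = (D_\phi\rho_G)^{-\top}\boldsymbol{s}(\phi,t)$, which is the claim. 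Conceptually, the inverse transpose is exactly the transformation law of a covector: identifying the score with the one-form $d\log p_t \in T^*_\phi\mathcal{F}$, the statement is just that $\boldsymbol{s}' = (\rho_G)_*\,\boldsymbol{s}$ is the pushforward of that one-form.

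I would close by specializing this single identity to the three regimes flagged in Section~\ref{sec:symmetries_of_score}: for a linear representation $D_\phi\rho_G = R_G$ is constant and the law reduces to $\boldsymbol{s}' = R_G^{-\top}\boldsymbol{s}$; for Lie-algebra-valued fields the induced action on the algebra is $\mathrm{Ad}_G$, so $R_G = \mathrm{Ad}_G$; and for fields valued in a curved group manifold (conjugation on ${\rm SU}(N)$ links) $\rho_G$ is genuinely nonlinear, so one works in local coordinates — or intrinsically with $d\log p_t$ — where $D_\phi\rho_G$ is the differential of the conjugation map. I expect the one genuine obstacle to be this last case: ensuring ``$(D_\phi\rho_G)^{-\top}$'' is well defined and the transformation law is chart-independent on a manifold with no preferred metric. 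The resolution is to treat the score throughout as the covector $d\log p_t$ rather than as a vector, so the transformation is the intrinsic pushforward of a one-form and no Riemannian identification enters; the accompanying bookkeeping of which copy of the (co)tangent space each object occupies on a product lattice is routine but should be stated explicitly.
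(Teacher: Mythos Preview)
Your proposal is correct and takes essentially the same approach as the paper: both proofs apply the chain rule to the invariance identity $p_t(\rho_G(\phi))=p_t(\phi)$ and then invoke invertibility of $D_\phi\rho_G$. The only cosmetic difference is direction --- you differentiate with respect to $\phi$ and then invert $(D_\phi\rho_G)^\top$, whereas the paper differentiates with respect to $\phi'$ (obtaining $(\partial\phi/\partial\phi')^\top$) and then applies the inverse function theorem; your added remarks on the pushforward Jacobian and the covector interpretation are sound refinements but go beyond what the paper's proof records.
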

\begin{proof}
    By the chain rule for gradients on manifolds and assuming $\rho_G$ is a diffeomorphism:
    \begin{align}
      \boldsymbol{s}'(\phi', t) = \nabla_{\phi'} \log p_t(\phi') = \left(\frac{\partial \phi}{\partial \phi'}\right)^\top \nabla_\phi \log p_t (\phi),
    \end{align}
    where we used invariance of the density: $p_t(\phi') = p_t(\phi)$. 
    Since $\rho_G$ is differentiable and invertible by assumption, we use the inverse function theorem to write
    \begin{equation}
        \frac{\partial \phi}{\partial \phi'} = \left( \frac{\partial \phi'}{\partial \phi} \right)^{-1},
    \end{equation}
    and thus the general transformation law for the score function becomes
    \begin{equation}\label{eq:score_fn_transformation}
    	\boldsymbol{s}(\phi, t) \mapsto \boldsymbol{s}'(\phi', t) = \left( \frac{\partial \rho_G(\phi)}{\partial \phi} \right)^{-\top} \boldsymbol{s}(\phi, t),
    \end{equation}
    as desired.
\end{proof}
Theorem~\ref{thm:score_func_transformation} shows that the score transforms like a covector under a change of variables induced by the symmetry group action. Though the result is unifying, its precise form depends on the nature of the field space and group action. In linear representations, for example, where $\mathcal{F}=\mathbb{R}^n$ with $\rho_G(\phi)=R_G\phi$, then $D_\phi \rho_G=R_G$ and the score transforms by
\begin{equation*}
    \boldsymbol{s}' = \left(R_G\right)^{-\top}\boldsymbol{s}
\end{equation*}
such as for $\mathcal{G}=\mathbb{Z}_2$ acting by sign flip, where $R_G=-\mathds{1}$ so $\boldsymbol{s}'=-\boldsymbol{s}$. If instead the fields take values in a Lie algebra, for instance, $\phi\in\mathfrak{g}$ and $\mathcal{G}$ acts by the adjoint representation $\rho_G(\phi) = \text{Ad}_G \phi$, then $D_\phi \rho_G=\text{Ad}_G$, so
\begin{equation*}
    \boldsymbol{s}' = \left({\rm Ad}_G\right)^{-\top}\boldsymbol{s}.
\end{equation*}
Likewise, if the fields $\phi=U\in\mathcal{G}$ are themselves group-valued and transform by conjugation $\rho_G(U)=GUG^{-1}$, then $D_U\rho_G$ gives the Jacobian acting on the tangent space $T_U\mathcal{G}$, and the score transforms via the dual action,
\begin{equation*}
    \boldsymbol{s}'(U') = \left(D_U \rho_G\right)^{-\top} \boldsymbol{s}(U)
\end{equation*}
reflecting covariance under the group’s induced action on the tangent space.

\subsection{Discussion on the ${\rm U}(1)$ symmetry of score functions}\label{apx:u1_symmetry_details}
Now, we extend our previous discussion on the representation-dependence of score function transformations to the  specific case of the symmetries of the ${\rm U}(1)$ score function.

We first work directly on the gauge group. Denote the score function at a lattice site $x$, viewed as a function of the gauge link $U_\mu$ for each $\mu$, as
\begin{equation}
    \boldsymbol{s}_\mu(x) = -\frac{\partial S[U]}{\partial U_\mu(x)}.
\end{equation}
In the group representation, link variables transform under the bi-regular representation of ${\rm U}(1)$ via
\begin{equation}
    U_\mu(x) \mapsto U_\mu'(x) = \Omega(x) U_\mu(x)\Omega^\dagger(x + \hat{\mu}).
\end{equation}
In turn, the score function for the transformed variable can be computed using the chain rule:
\begin{equation}
    \boldsymbol{s}_\mu'(x) = - \frac{\partial S[U']}{\partial U_\mu'(x)} =  - \frac{\partial U_\mu(x)}{\partial U_\mu'(x)} \frac{\partial S[U]}{\partial U_\mu(x)},
\end{equation}
where we assumed the action is gauge-invariant, i.e. $S[U']~=~S[U]$. Since
\begin{equation}
    U_\mu'(x)~=~\Omega^\dagger(x) U_\mu(x) \Omega(x + \hat{\mu}),    
\end{equation}
we can easily compute the Jacobian:
\begin{equation}
    \frac{\partial U_\mu(x)}{\partial U_\mu'(x)} = \Omega^\dagger(x) \Omega(x + \hat{\mu}). 
\end{equation}
Then, since ${\rm U}(1)$ is Abelian, we can write
\begin{equation}
    \boldsymbol{s}_\mu'(x) = \Omega^\dagger(x) \boldsymbol{s}_\mu(x) \Omega(x + \hat{\mu}).
\end{equation}
Hence, in the group space, the score function transforms equivariantly under the bi-regular transformation.

Now we turn to the Lie algebra $\mathfrak{u}(1) \cong \mathbb{R}$ and consider the same argument in terms of angular variables, where the links and gauge transformations adopt their angular definitions as in equations~\eqref{eq:gauge_link_phase_angle} and \eqref{eq:gauge_transf_phase_angle}, respectively. We view the score as a function of the real valued link phases $\theta_\mu(x) \in [-\pi, \pi)$ and denote
\begin{equation}
    \boldsymbol{s}_\mu(x) := -\frac{\partial S[\theta]}{\partial \theta_\mu(x)}.
\end{equation}
The score function transformation law is
\begin{equation}
    \boldsymbol{s}_\mu'(x) = \frac{\partial \theta_\mu(x)}{\partial \theta_\mu'(x)}\boldsymbol{s}_\mu(x).
\end{equation}
The link phases transform under the group according to the linear shift given in eq.~\eqref{eq:angular_gauge_transform}, so the group action in the algebra is by affine translation. Thus, the Jacobian is simply the identity, meaning that, when expressed as a function of angular variables in the Lie algebra, the score is gauge-invariant. In other words, the score function transforms under the adjoint representation, which is trivial for ${\rm U}(1)$.

\paragraph{Data Availability Statement.} This article has associated data in a data repository. Pedagogical, tutorial-style code notebooks that walk through simplified versions of the examples presented in sections \ref{sec:0d_toy_example}, \ref{sec:phi4_application}, and \ref{sec:u1_2d_application} can be found at \url{https://gitlab.com/ovega141/diffusion\_for\_lqft}.

\paragraph{Code Availability Statement.} This article has associated code in a code repository. Pedagogical, tutorial-style code notebooks that walk through simplified versions of the examples presented in sections \ref{sec:0d_toy_example}, \ref{sec:phi4_application}, and \ref{sec:u1_2d_application} can be found at \url{https://gitlab.com/ovega141/diffusion\_for\_lqft}.

\paragraph{Open Access.} This article is distributed under the terms of the Creative Commons Attribution License (\href{https://creativecommons.org/licenses/by/4.0/}{CC-BY4.0}), which permits any use, distribution and reproduction in any medium, provided the original author(s) and source are credited.

\bibliographystyle{JHEP}
\bibliography{refs.bib}

\end{document}